\documentclass[12pt]{article}
\usepackage{setspace}
\usepackage{amsmath,amsthm,amsfonts,amssymb}
\usepackage[utf8]{inputenc}
\usepackage[english]{babel}
\usepackage{geometry}
\usepackage[colorinlistoftodos]{todonotes}
\usepackage{lmodern}
\usepackage{booktabs}
\usepackage{makecell}
\usepackage{tabularx}
\usepackage{graphicx}
\graphicspath{{figures/}{./}}
\usepackage[hypertexnames=false]{hyperref}
\usepackage{float}
\usepackage[nottoc,notlot,notlof]{tocbibind}
\usepackage{natbib}
\usepackage{authblk}
\usepackage{enumitem}
\usepackage{bm}
\usepackage{csquotes}
\usepackage{rotating}
\usepackage{multirow}

\newtheorem{assumption}{Assumption}
\newtheorem{lemma}{Lemma}
\newtheorem{proposition}{Proposition}

\title{
Ranked-choice conjoint experiments\thanks{
We would like to thank Don Green, Kosuke Imai, Ben Lauderdale, Sara Hobolt, Michael Laver, Sarah Brierley, and Andy Eggers, as well as participants at the 2024 European Political Science Association annual conference and LSE Political behavior seminar, for their helpful comments on this project. Both empirical studies described in this paper underwent ethics review at the London School of Economics. 
The pre-analysis plans for this study are registered at \url{https://osf.io/fawuy}, and the \texttt{cjrank} R package is available at \url{https://github.com/tsrobinson/cjrank}.}}%

\author{Thomas S. Robinson\thanks{Department of Methodology, London School of Economics.\\ (email: \texttt{t.robinson7@lse.ac.uk})} \hspace{10mm}
Mats Ahrenshop\thanks{Department of Politics and International Relations, University of Oxford.\\(email: \texttt{mats.ahrenshop@politics.ox.ac.uk})} \hspace{10mm}
  Spyros Kosmidis\thanks{Department of Politics and International Relations, University of Oxford.\\(email: \texttt{spyros.kosmidis@politics.ox.ac.uk})} 
}
\date{April 2026}

\begin{document}
\onehalfspacing
\hypersetup{pageanchor=false}

\maketitle

\begin{abstract}
Forced-choice conjoint designs have become a staple method in the experimentalist's toolkit. However, the forced-choice outcome is neither always consistent with the types of choices individuals make in real political contexts, nor is it statistically efficient. In this paper, we formalize how \textit{ranked} outcomes can be integrated into the conjoint framework. We provide a proof that rank-expanded estimators are equivalent to conventional AMCE, a theoretical account of how additional profiles increase the efficiency of conjoint designs, and design-based tests for the transitivity and independence of irrelevant alternatives assumptions that underpin the expansion. Across two pre-registered survey experiments--the first comparing forced-choice and ranked-choice designs across candidate and policy domains, and the second varying the number of ranked profiles--we find that ranked-choice conjoints yield substantively similar but more precise AMCE estimates, shrinking standard errors by 12-13\% with one additional profile and up to 55\% with six profiles per vignette. Based on efficiency--validity trade-offs, we recommend $K = 4$ profiles for most applications. We provide an accompanying open-source \texttt{R} package, \texttt{cjrank}, that implements rank expansion, AMCE estimation, efficiency diagnostics, and the assumption tests described in this paper. 

\end{abstract}

\thispagestyle{empty}

\newpage
\thispagestyle{empty}

\newpage

\doublespacing

\pagenumbering{arabic}
\hypersetup{pageanchor=true}

\section{Introduction}

Since their introduction to political science by \citet{Hainmueller2014}, conjoint experiments have become a staple of political scientists' research toolkit. In these designs, subjects are faced with a choice between (typically two) profiles. The ``conjoint" aspect of the design stems from the fact that each profile is described by multiple attributes, the values of which are randomized. Within political science, conjoint experiments have helped understand preferences over tax policy \citep{ballard2017structure}, voters' expectations over corruption \citep{spencer2020appearance}, and emigration decisions \citep{duch2022nativist}, to name a few.

The core benefits of the conjoint design stem from two primary features. First, conjoint experiments allow researchers to estimate treatment effects less intrusively than direct preference measures. By averaging across choices, profiles, and the values of other attributes, researchers can estimate the average marginal component effect (AMCE) of attribute-levels (relative to some set of baselines). Second, and unlike simpler survey experiments, the conjoint design allows multiple dimensions to be randomized at once--allowing researchers both to test multiple hypotheses more efficiently, and making the treatment better resemble the information contexts individuals likely face in real political situations \citep{Hainmueller2014}.

Notwithstanding these advantages, conjoint experiments are far from a silver bullet. Beyond criticisms over the choice of \textit{estimand}  \citep{abramson2022we}, coefficient interpretation \textit{across} models \citep{leeper2020measuring}, and measurement error \citep{clayton2023correcting}, conjoint experiments typically require subjects to complete many, repetitive rounds in order to obtain precise estimates of the marginal effects of interest \citep{Bansak2018}. In doing so, and even if survey satisficing remains low, conjoint experiments can be costly to run due to the number of rounds subjects have to consider.

In this paper, we formalize the integration of ranked outcomes into the conjoint framework for political science. Rankings have a longer history in marketing conjoint analysis \citep{vermeulen2011rank}, but their properties have not been established for the AMCE estimand central to political science applications, nor have formal diagnostics been developed. Our theoretical contributions are threefold. First, we prove the connection between the ranking and forced-choice estimands. Second, we show that efficiency gains scale with the number of profiles. Third, we introduce design-based empirical tests for transitivity and independence of irrelevant alternatives (IIA), the assumptions that underpin rank expansion. To the best of our knowledge, the ranked conjoint design has received next to no attention by political scientists, despite its appealing properties and natural suitability to many types of political decisions individuals make.

The ranked-choice design offers two complementary advantages. First, from a statistical perspective, ranking profiles allows researchers to recover more information more quickly. For a fixed number of randomized profiles, the ranked-choice design requires fewer rounds to be shown to subjects and, for each of those rounds, we recover more comparisons. We demonstrate that, for a fixed number of profiles, the resulting estimates will be more precise under a transitivity assumption. Second, from a measurement perspective, rankings capture more information from each vignette. The forced-choice measure discards all preference signals beyond the top choice, whereas rankings reveal subjects' relative preferences over all options. We show that these additional comparisons carry genuine preference content, rather than noise, against independent intensity ratings. This amounts to a purely within-respondent efficiency gain--more information per round from the same respondent--rather than the elicitation of a qualitatively different kind of preference data, and it is especially attractive as the field increasingly relies on probability samples where survey time is expensive \citep{Bansak2018}.

To assess the performance of ranked-choice conjoints, we conduct two pre-registered survey experiments.\footnote{Pre-registration documents can be accessed at 
\url{https://osf.io/fawuy}. The original plan submitted 06 June 2024 covers Study 1. A supplementary plan was submitted on 02 April 2025 to pre-register further hypotheses on a new sample. Both plans are available from the same link.} Study 1 compares forced-choice and ranked-choice designs across candidate election \citep{Kirkland18} and budget spending \citep[cf.][]{bansak2021} domains. Study 2 varies the number of ranked profiles ($K = 2, 4, 6$) and introduces bespoke tests of the transitivity and IIA assumptions. Across both studies, ranked-choice conjoints yield substantively similar but more precise AMCE estimates, with clustered standard errors 12-55\% smaller depending on $K$. These efficiency gains hold even when comparing designs that present the same total number of profiles, confirming that the advantage reflects how many pairwise comparisons each vignette yields rather than the total number of profiles evaluated. Ranked-choice data also enables better out-of-sample prediction of held-out profile choices \citep[building on][]{bansak2021}.

We accompany this paper with an open-source \texttt{R} package, \textbf{cjrank}, available at \url{https://github.com/tsrobinson/cjrank}. The package provides functions for expanding rank orderings into pairwise comparisons, estimating AMCEs with clustered standard errors, assessing efficiency across rounds, and testing for transitivity and IIA violations.

\section{Rankings as an alternative to forced choices}

While conjoint experiments have been used to study preferences since the 1970s, the method was popularized in political science by \citet{Hainmueller2014}. Subjects typically face a choice between two ``profiles", which are bundles of randomized treatments along a number of dimensions referred to as ``attributes". Each subject completes multiple rounds of these tasks, where each vignette re-randomizes the descriptions of each profile. For the sake of consistency, let a typical conjoint design be posed to $N$ individuals, who complete $J$ rounds, each of which has $K$ profiles, and each profile has $L$ attributes. The number of alternative attribute-levels within each attribute is denoted as $v_L$.

The outcome in conjoint experiments is measured at the subject-round-profile level. In the vignette, subjects are typically prompted with a question of the form ``If you had to choose between these two profiles,..." which is recorded using a binary indicator  (i.e., $y_{ijk} \in \{0,1\}$). Beyond the ability to test multiple preference dimensions simultaneously, therefore, one advantage of conjoint experiments is that for a single vignette the researcher recovers $K$ observations, in comparison to simpler randomized designs where one typically recovers a single observation per vignette.

Nothing prevents $K$ being equal to 1 or larger than 2. For example, \citet{duch2025governing} show a \textit{single} randomized profile per round. In single-profile designs, subjects choose whether or \textit{not} to support that specific profile. When $K>2$, the number of observations scales linearly with the number of profiles but power will not scale as quickly--for any two profiles not chosen by the respondent, the shared outcome is 0 and thus indiscriminate.

\subsection{Ranked outcomes}
\label{sec:rank}

To garner more information from more profiles, the experimenter needs to measure subjects' relative preferences over all options in the vignette. An intuitive way to do so is to ask subjects to rank profiles, rather than choose their most preferred. Rankings identify not only the top choice (as in the forced-choice measure) but also indicate subjects' relative preferences over all other profiles too. Ranking measurements are therefore denser in terms of the information they can capture compared to forced-choice outcomes. 

The lack of attention paid to implementing rankings within conjoint experiments is surprising, given how ubiquitous the act of ranking is within many of the political contexts we study \citep{atsusaka2022causal}. In studies of candidate elections, for example, and outside the minority of states that use majoritarian electoral systems, ranking candidates is a more natural analogue of the task that voters actually perform at the ballot box \citep{golder2005democratic,atsusaka_2023}. In policymaking contexts, too, lawmakers and citizens may be presented with numerous options, and it may be more natural to rank these outcomes than simply choose one over all others.

More broadly, rankings can improve intra-coder reliability relative to repeated pairwise comparisons \citep{kaufman2021measure} and may help address some of the measurement error inherent to forced-choice conjoint tasks \citep{clayton2023correcting}. Posing more than two alternatives can depress serial non-participation of respondents leading to more robust models \citep{rolfe2009impact}.

A further advantage of ranking is structural: because ranking extracts more information per round, the same number of total profiles can be evaluated in fewer rounds. This reduces exposure to potential carryover effects, whereby attribute values shown in earlier rounds influence how subjects evaluate profiles in later rounds \citep{Ham_Imai_Janson_2024}.

\subsection{Ranking profiles in a conjoint}

Let the $k$th profile shown to subject $i$ in round $j$ of a conjoint experiment be described as $p_{ijk}$. Each potential profile $p$ has a corresponding attribute vector $X_p$ describing the $L$ attribute values. We will assume that profiles are realized by independently and uniformly assigning attribute levels at random. The full set of profiles shown in that round is denoted $\mathbf{P}_{ij}$. For simplicity, we will often drop $j$ and focus on a single round of a conjoint experiment. 

Let $R_{i}(p_{ik}, \mathbf{P}_{i[-k]}) \in \{1,\ldots,K\}$ denote subject $i$'s potential ranking of the $k$th profile, given $K-1$ other profiles in that vignette, and let $R_{ik}$ be the realized rank for ${p}_{ik}$.  Finally, for comparison, let $Y_{ik}(p_{ik},p_{ik'}) \in \{0,1\}$ indicate whether subject $i$ would select the $k$th profile in a forced choice against $p_{ik'}$.

Taking measures $\mathbf{R}_{i}$, the ranked-choice outcome yields $K$ observations per vignette (the same as in a forced-choice vignette). We then ``expand" the measured rankings to recover pairwise comparisons between every combination of two profiles in that vignette. That is, for every possible pairwise comparison of profiles in a vignette, we assign the outcome $1$ to the profile with the higher rank, and $0$ to the other (Figure \ref{fig:rcc_schematic}).

\begin{figure}
    \centering
    \includegraphics[width=0.9\textwidth]{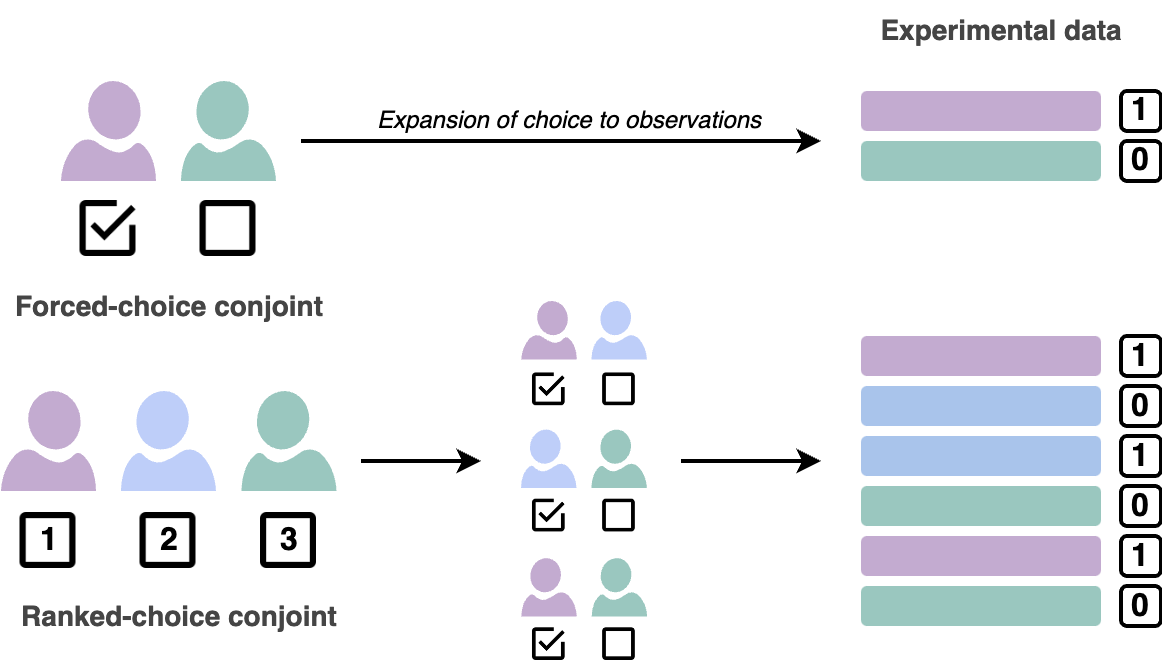}
    \caption{Difference in how forced-choice versus ranked-choice decisions expand to experimental data.}
    \label{fig:rcc_schematic}
\end{figure}

Since the expanded pairwise comparisons maintain the same structure as the forced-choice design, we employ exactly the same estimating strategy to recover an estimate of the AMCE. Thus, aside from adapting the vignettes, the researcher has to make few adaptations to their analysis framework.\footnote{Modeling the ranked-choice outcomes directly is possible, building on recent advances in incorporating ranking outcomes into the potential outcomes framework \citep{atsusaka2022causal}. We do not focus on this approach because, in doing so, we would necessarily shift the estimand away from probabilistic statements about ``choosing" one profile over another.}

That is, we recover estimates of the AMCEs by estimating a linear probability model of the form:
\begin{equation}
\label{eq:lpm}
y_{ijk} = \sum_{l}\bm{X_l}\bm{\beta_{l}} + e_{ijk},
\end{equation}
where $\bm{X_l}$ is the $N \times (v_L-1)$ indicator matrix of randomized treatments for the $l$th attribute, and $\bm{\beta_l}$ is the corresponding vector of AMCEs.

The statistical advantage of this expansion is the increase in power entailed by the larger number of observations. Whereas the forced-choice design scales at a constant rate--adding an extra profile in a vignette increases the number of observations by one--the ranked-choice design increases the number of comparisons non-linearly. For three profiles per vignette, we recover 3 comparisons and thus 6 observations; for four profiles we recover 6 comparisons and thus 12 observations; and for five profiles, 20 observations and so forth.\footnote{The number of observations is $2 \times \binom{K}{2}$.} In Section \ref{sec:sim}, we demonstrate by simulation the extent to which this increase in observations improves the power of the conjoint model.

\subsection{AMCE validity under rank expansion}

We must make two further assumptions alongside the standard assumptions laid out in \citet{Hainmueller2014} to guarantee the unbiasedness of the AMCE estimates derived from Equation \ref{eq:lpm}. Here, we briefly outline the intuition behind each. We prove this result formally in Appendix Section \ref{sec:proof}.

\paragraph{Assumption (A): Transitivity} \textit{
For any set of profiles $\{p, p', p''\}$ and latent utility function $U_i(\cdot)$, if $U_i(p) > U_i(p') > U_i(p'')$, then $R_{ip} < R_{ip'} < R_{ip''}$.
}

The first, and intuitive, assumption is that a subject's latent utilities over profiles are transitive. Under the random utility view that motivates conjoint designs, each profile’s attributes map onto a latent utility, and respondents rank profiles by comparing these utilities. Transitivity simply requires that this latent preference order is internally consistent: if a respondent prefers $p_x$ to  $p_y$ and  $p_y$ to  $p_z$, then they must prefer  $p_x$ to  $p_z$. Violations could arise from stochastic choice noise, inattentive responding, or measurement error when respondents convert utilities into an explicit ranking.

The assumption is most defensible when the task is cognitively light--for example, ranking three or four profiles. With few profiles it is unlikely that respondents will cycle or accidentally invert ranks. As $K$ grows, the cognitive challenge increases and maintaining a perfectly consistent global order becomes harder, so transitivity should be treated as an empirical question rather than a given. In the following section, we outline a test experimenters can include that estimates whether subjects' rankings are transitive.

\paragraph{Assumption (B): Pairwise independence (IIA)} \textit{
$\Pr(R_{ip} < R_{ip'} | X_p, X_p', X_r) = \Pr(R_{ip} < R_{ip'} | X_p, X_p')$.
}

The second, and perhaps less intuitive, assumption stems from the fact that one's relative ranking between any two profiles is made in the presence of other profiles. Therefore, in order to expand out this ranking to the hypothetical forced choice between \textit{only} those two profiles, we must assume that the \textit{relative} ranking of two profiles is not conditional on the descriptions of any other profile shown in that vignette ($X_r$). This pairwise independence assumption stems from IIA because, put differently, the presence of any other profiles $p''$ in the vignette must be irrelevant to the relative ranking of $p$ and $p'$. For example, a violation would occur if the partisan composition of the other $K-2$ profiles in a candidate vignette changed the weight a subject places on career experience enough to reverse the relative ranking of two focal profiles $p$ and $p'$.

In practice, two features of the conjoint design work in favor of this assumption. First, because attribute levels are independently and uniformly randomized, the ``remaining'' profiles $X_{-(pp')}$ are statistically independent of any given pair $(X_p, X_{p'})$. This means that even if IIA is violated--i.e., the remaining profiles influence pairwise comparisons--randomization prevents this influence from systematically correlating with specific attribute-levels, bounding the scope of any resulting bias. Second, we show formally (Appendix~\ref{sec:iia_violation}) that bias arises only through non-linear interactions between focal and non-focal profile attributes: if the influence of remaining profiles is approximately additive, OLS remains unbiased. In practice, this means that mild violations of IIA--where the presence of other profiles slightly shifts the weight a respondent places on a given attribute, but does not reverse pairwise rankings--will produce negligible bias. Empirically, we confirm this in Study 2: IIA violation rates remain below 25\% even at $K = 6$, and re-estimating AMCEs after excluding violators yields nearly identical results (Section~\ref{sec:assumptions}).

\subsection{Uncertainty estimation}

There are two inferential concerns related to the rank expansion. First, observations derived from the same vignette are not independent: under rank expansion, all $2\binom{K}{2}$ pairwise outcomes are deterministic functions of the same $K$ ranks. This within-vignette dependence does not affect the unbiasedness of the AMCE estimator, but means that standard errors computed under an i.i.d.\ assumption will be too small. Second, the effective sample size is $NJK$ profile-level observations rather than $2\binom{K}{2}NJ$ expanded rows, because each profile's attributes are repeated across $K-1$ pairwise comparisons (Proposition~\ref{prop:rankequiv}). Both concerns are addressed by clustering standard errors at the subject level, which accounts for all sources of within-subject dependence--including the mechanical correlation among rows derived from the same vignette and the accumulation of such dependence as $K$ grows.\footnote{Empirically, the ratio of subject-clustered to unclustered standard errors is approximately 1.05 for forced-choice, 1.22 for $K=3$, 1.43 for $K=4$, and 1.78 for $K=6$--confirming that clustering is increasingly important as $K$ grows. All standard errors in this paper use CR2 variance estimation.}

\section{Simulation evidence of ranking efficiency}
\label{sec:sim}

We run Monte Carlo simulations to assess the power advantage of ranking. We simulate a conjoint with six binary attributes, 500 subjects, and three rounds, drawing latent utilities from a linear model $U_{ijk} = \bm{\gamma}\bm{X}_{ijk} + \epsilon_{ijk}$ with $\epsilon \sim \mathcal{N}(0,1)$ and coefficients $\bm{\gamma}$ spanning a wide range.\footnote{$\bm{\gamma} = [0.01,0.02,0.04,0.06,0.08,0.1,0.15,0.2,0.25,0.5,1,2]$} The RCC condition uses $K = 3$ profiles per round; the FCC condition uses $K = 2$. Over 1,000 replications, Figure \ref{fig:power_difference} confirms that RCC yields higher power, with the largest gains at moderate effect sizes (0.06-0.2 SD of residual variance), corresponding to conventional AMCE magnitudes of about 0.02.\footnote{In Appendix Figure \ref{fig:amce_comparison}, we show that the densities of AMCE estimates for both RCC and FCC designs are centered on the same value, although the RCC has a tighter distribution.}

\begin{figure}
    \centering
    \includegraphics[width = \textwidth]{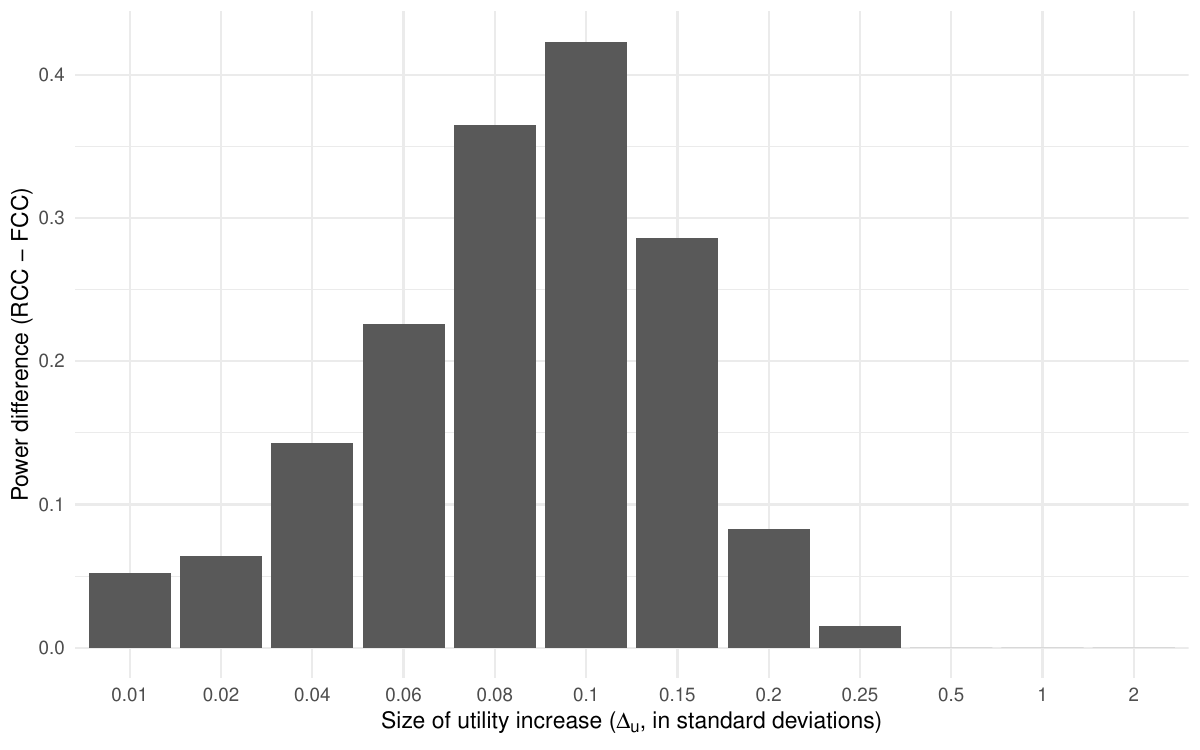}
    \caption{Estimated difference in power between RCC and FCC designs, holding constant the number of subjects and rounds.}
    \label{fig:power_difference}
\end{figure}

\section{Empirical comparison of forced versus ranked-choice designs}

We assess the ranked-choice conjoint design across two pre-registered survey experiments. Study 1 establishes the baseline utility of the ranked-choice conjoint model across two political science contexts--candidate elections and policy preferences--comparing it against the standard forced-choice design. Study 2 focuses on how increasing the number of profiles ($K = 2, 4, 6$) affects the design, and introduces tests for violations of the transitivity and IIA assumptions. We pre-registered seven hypotheses covering survey duration, satisficing, estimation efficiency, attribute importance, estimate stability, transitivity, and IIA (see Appendix Section \ref{sec:pre-reg} for the full list).

\subsection{Experimental design}

\paragraph{Study 1} In both experiments, subjects are randomized into completing either forced-choice or ranked-choice conjoint tasks. The first experiment is a \textit{candidate election} context, replicating \citet{Kirkland18}: subjects evaluate candidates for the US House described by age, race, gender, career, political experience, and party (Table~\ref{tab:attributes}). The second experiment concerns \textit{federal budget proposals}, with spending on five areas varying between ``decrease'', ``remain the same'', and ``increase'', inspired by \citet{bansak2021} (Table~\ref{tab:attributes2}).

\begin{table}[H]
	\caption{Conjoint attributes and levels: Candidate experiment.} \label{tab:attributes}
\begin{tabular}{llllll}
\toprule
Race     & Political experience       & Career experience    & Gender & Age & Party       \\
\midrule
White    & None                       & Educator             & Female & 35  & Independent \\
Hispanic & School board president     & Stay-at-home Mom/Dad & Male   & 45  & Democrat    \\
Black    & City council member        & Small business owner &        & 55  & Republican  \\
Asian    & State legislator           & Police officer       &        & 65  &             \\
         & Representative in congress & Electrician          &        &     &             \\
         & Mayor                      & Business executive   &        &     &             \\
         &                            & Attorney             &        &     &   \\
\bottomrule
\end{tabular}
\end{table}

\begin{table}[!htbp]
\caption{Conjoint attributes and levels: Budget experiment.} \label{tab:attributes2}
\resizebox{\textwidth}{!}{
\begin{tabular}{@{}ccccc@{}}
\toprule
Education spending & Environmental protection spending & Health and Medicare spending & Military spending & Social security spending \\ \midrule
Decrease & Decrease & Decrease & Decrease & Decrease \\
Remain the same & Remain the same & Remain the same & Remain the same & Remain the same \\
Increase & Increase & Increase & Increase & Increase \\\bottomrule
\end{tabular}}
\end{table}

All subjects completed both experiments in randomized order. The forced-choice arm completed $J = 6$ tasks with $K = 2$ profiles; the ranked-choice arm completed $J = 4$ tasks with $K = 3$ profiles (both evaluating 12 profiles per experiment). In the ranked-choice arm, subjects drag profile labels into descending preference order (Figure \ref{fig:vignette}). After each experiment, subjects complete factual manipulation checks \citep{Kane2019} and evaluate a single fixed profile. The survey was fielded in June 2024 on Prolific ($N = 2{,}211$).\footnote{Appendix Tables~\ref{tab:balance1} and~\ref{tab:balance2} confirm covariate balance across conditions in both studies.}

\begin{figure}
    \centering
    \includegraphics[scale = 0.35]{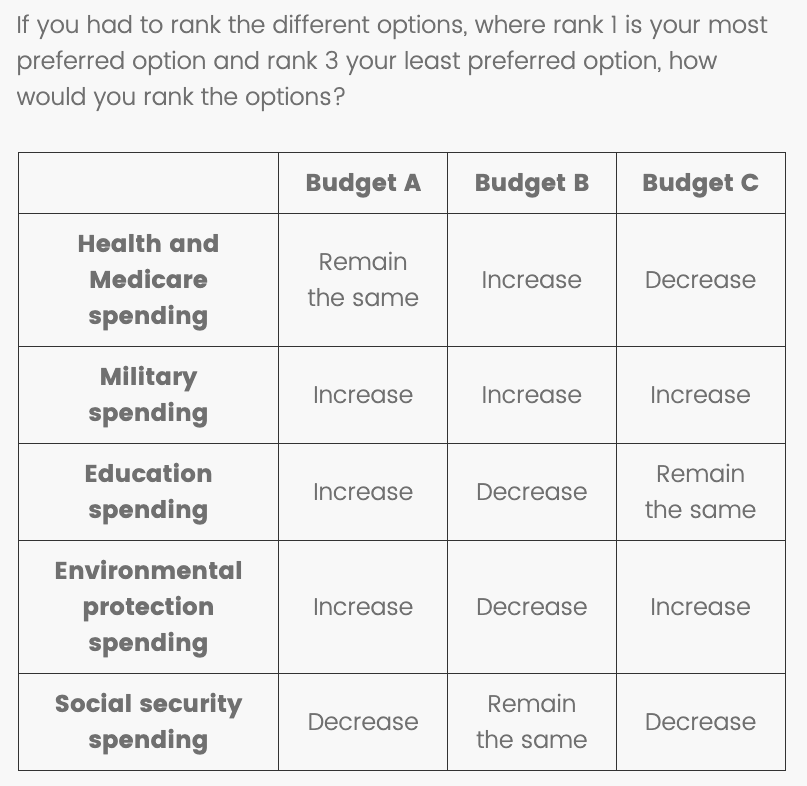}
    \caption{Example of the ranked-choice budget vignette.}
    \label{fig:vignette}
\end{figure}

\paragraph{Study 2} Our follow-up study replicates the candidate conjoint from Study 1, but randomizes subjects into one of three conditions: $K = 2$ (forced-choice), $K = 4$, or $K = 6$ profiles per vignette, each with $J = 4$ rounds. This design allows us to assess how AMCE estimates and assumption violations change as the number of profiles increases. To test transitivity (H6), after the main experiment we re-present two randomly selected profiles from the conjoint and ask subjects to choose between them; we then compare this forced choice against the implied ranking from the main experiment. To test IIA (H7), we repeat this exercise with two other profiles, but add a newly generated third profile and ask subjects to rank all three; we then check whether the relative ordering of the original two profiles is preserved. The survey was fielded in April 2025 on Prolific ($N = 2{,}837$; approximately 950 per condition).\footnote{The \texttt{cj\_test\_consistency()} function in the \textbf{cjrank} package implements these tests, computing violation rates by condition and conducting pairwise proportion tests between conditions.}

\section{Results}

\subsection{Survey duration and satisficing}

In Study 1, subjects completing ranked-choice tasks spent, on average, 44 seconds longer on the candidate experiment and over two minutes longer on the policy task (Table~\ref{tab:timings}). Critically, however, ranked-choice subjects completed 33\% fewer rounds (4 versus 6) for the same number of total profiles. Per effective observation gained, the ranked-choice design is approximately 3 seconds quicker per observation in the candidate experiment. Study 2 confirms this pattern at larger $K$: total time increases with the number of profiles, but the time per observation decreases substantially (Appendix Table~\ref{tab:timings_followup}). To assess whether the precision gains justify the additional time, we ask: how much precision does each design deliver per minute of survey time? Combining the squared standard errors with mean completion times, ranked-choice designs deliver 32\% ($K = 4$) to 42\% ($K = 6$) more precision per unit of survey time than the forced-choice baseline.

No attrition was observed in either study. To check for satisficing, we employed FMCs \citep{Kane2019}. In Study 1, ranked-choice subjects perform marginally worse on identifying the correct vignette format (by about 4 percentage points in both experiments), but show no difference on other measures (Table~\ref{tab:fmc}). Study 2 shows a similar pattern: subjects ranking 6 profiles are slightly worse at vignette identification, but no worse on other FMCs (Appendix Table~\ref{tab:fmc_followup}). We also find no evidence of systematic learning or fatigue effects across rounds: of 120 pairwise Z-tests comparing first-round and last-round AMCE estimates across conditions and studies, only 7 are significant at $p < 0.05$--consistent with chance (Appendix Figures~\ref{fig:amce_round_s1} and~\ref{fig:amce_round_s2}).

\begin{table}
\begin{center}
\begin{tabular}{l c c c c c c}
\toprule
 & \multicolumn{3}{c}{Total time taken} & \multicolumn{3}{c}{Time per observation} \\
\cmidrule(lr){2-4} \cmidrule(lr){5-7}
 & Candidate & Policy & Combined & Candidate & Policy & Combined \\
\midrule
Ranked-choice   & $44.40^{***}$  & $134.26^{**}$  & $44.40$        & $-2.72^{***}$ & $-0.04$       & $-2.72$      \\
                & $(7.02)$       & $(50.39)$      & $(35.97)$      & $(0.35)$      & $(2.11)$      & $(1.51)$     \\
Policy          &                &                & $25.67$        &               &               & $2.14$       \\
                &                &                & $(35.70)$      &               &               & $(1.50)$     \\
Ranked x Policy &                &                & $89.86$        &               &               & $2.67$       \\
                &                &                & $(50.87)$      &               &               & $(2.14)$     \\
(Intercept)     & $109.67^{***}$ & $135.34^{***}$ & $109.67^{***}$ & $9.14^{***}$  & $11.28^{***}$ & $9.14^{***}$ \\
                & $(4.93)$       & $(35.36)$      & $(25.25)$      & $(0.25)$      & $(1.48)$      & $(1.06)$     \\
\midrule
R$^2$           & $0.02$         & $0.00$         & $0.01$         & $0.03$        & $0.00$        & $0.00$       \\
Adj. R$^2$      & $0.02$         & $0.00$         & $0.00$         & $0.03$        & $-0.00$       & $0.00$       \\
Num. obs.       & $2211$         & $2211$         & $4422$         & $2211$        & $2211$        & $4422$       \\
\bottomrule
\multicolumn{7}{l}{\scriptsize{$^{***}p<0.001$; $^{**}p<0.01$; $^{*}p<0.05$}}
\end{tabular}
\caption{Effect of ranked choice (compared to forced choice) on completion time}
\label{tab:timings}
\end{center}
\end{table}

\begin{table}
\begin{center}
\begin{tabular}{l c c c c c c}
\toprule
 & \multicolumn{2}{c}{Number of attributes} & \multicolumn{2}{c}{Identified vignette} & \multicolumn{2}{c}{Dimensions} \\
\cmidrule(lr){2-3} \cmidrule(lr){4-5} \cmidrule(lr){6-7}
 & Candidate & Policy & Candidate & Policy & Candidate & Policy \\
\midrule
Ranked-choice & $0.28^{***}$ & $-0.02$      & $-0.04^{*}$  & $-0.03^{**}$ & $-0.03$      & $-0.01$      \\
              & $(0.03)$     & $(0.03)$     & $(0.02)$     & $(0.01)$     & $(0.03)$     & $(0.01)$     \\
(Intercept)   & $0.65^{***}$ & $0.52^{***}$ & $0.78^{***}$ & $0.97^{***}$ & $2.56^{***}$ & $0.98^{***}$ \\
              & $(0.02)$     & $(0.02)$     & $(0.01)$     & $(0.01)$     & $(0.02)$     & $(0.01)$     \\
\midrule
R$^2$         & $0.03$       & $0.00$       & $0.00$       & $0.00$       & $0.00$       & $0.00$       \\
Adj. R$^2$    & $0.03$       & $-0.00$      & $0.00$       & $0.00$       & $-0.00$      & $0.00$       \\
Num. obs.     & $2160$       & $2197$       & $2209$       & $2209$       & $2211$       & $2210$       \\
\bottomrule
\multicolumn{7}{l}{\scriptsize{$^{***}p<0.001$; $^{**}p<0.01$; $^{*}p<0.05$}}
\end{tabular}
\caption{Effect of ranked choice (compared to forced choice) on manipulation checks}
\label{tab:fmc}
\end{center}
\end{table}

\subsection{AMCE estimates and efficiency}

\paragraph{Study 1} Figures \ref{fig:mod_cand} and \ref{fig:mod_policy} plot the estimated AMCEs from the candidate and policy experiments. The substantive interpretation of the AMCEs is very similar across designs: in almost all cases, coefficients share the same sign and order of magnitude. What is notable, however, is that confidence intervals are narrower in every case for the ranked-choice condition. On average, the RCC design yields standard errors that are 12\% smaller than the forced-choice equivalent in the candidate experiment and 13\% smaller in the policy experiment, using standard errors clustered at the subject level throughout.\footnote{These empirical gains are attenuated relative to the theoretical prediction of 33\% (Proposition~\ref{prop:variance}), reflecting treatment effect heterogeneity and increased intra-cluster correlation. Without clustering, the SE reduction is approximately 25\%, illustrating the importance of accounting for the within-subject dependence created by rank expansion.} This efficiency gain is achieved while requiring subjects to evaluate 33\% fewer rounds (4 versus 6), reducing the repetitive burden of the conjoint task.

One implementation consideration is profile display position. We find a primacy bias in the drag-and-drop ranking interface--profiles listed lower receive systematically worse ranks--but this does not affect AMCE estimates since attributes are balanced across positions by randomization. We recommend randomizing display order and, as a robustness check, including position fixed effects (see Appendix Table~\ref{tab:position_effects}).

\begin{figure}[tbp]
    \centering
    \includegraphics[width = \textwidth]{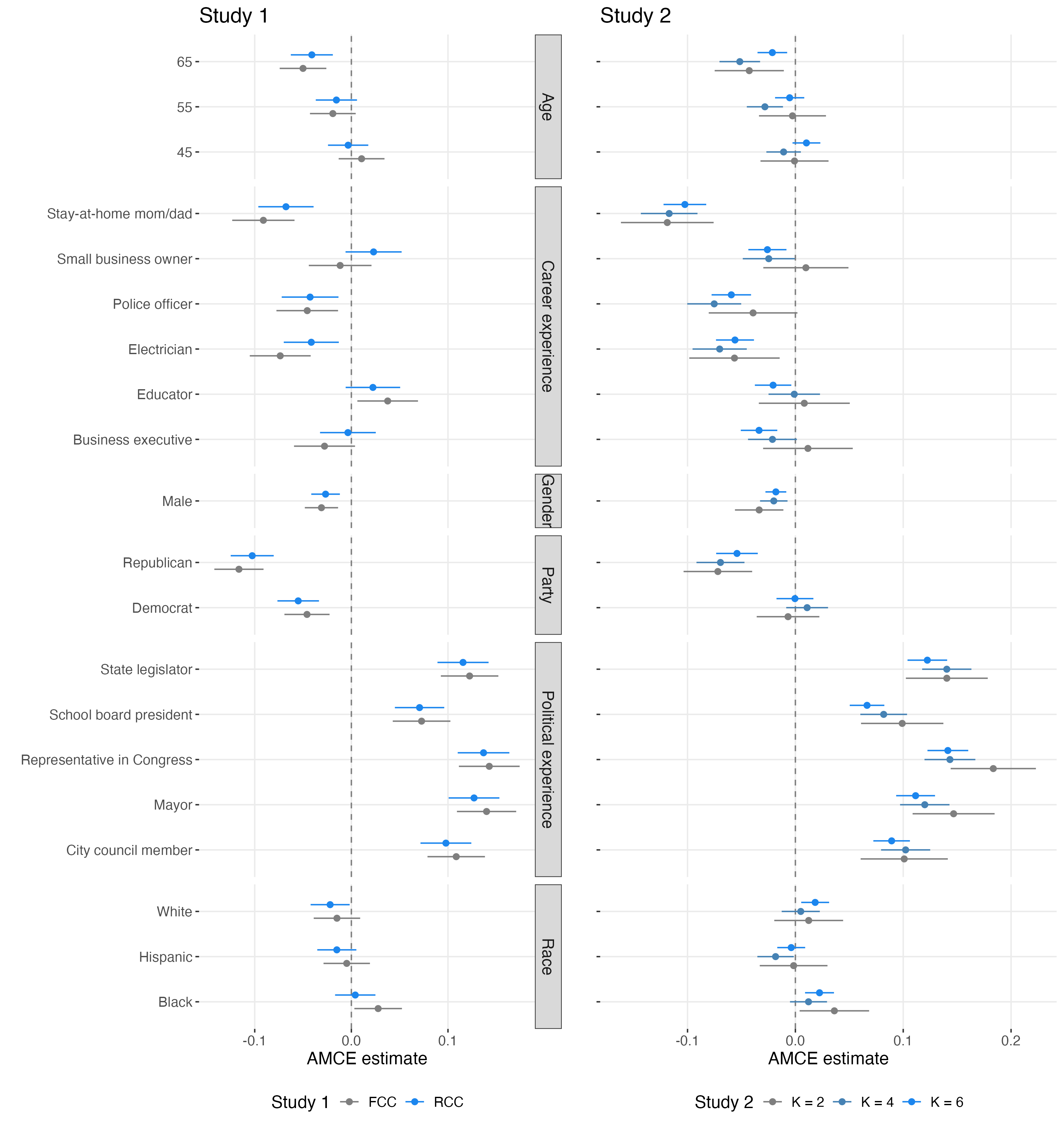}
    \caption{Estimated AMCEs from the candidate experiment. Left panel: Study 1 (FCC vs RCC, $K = 2$ vs $K = 3$). Right panel: Study 2 ($K = 2, 4, 6$). Point estimates are stable across designs; confidence intervals narrow with increasing $K$.}
    \label{fig:mod_cand}
\end{figure}

The AMCEs from the policy experiment exhibit a similar pattern of consistency across conditions, with ranked-choice estimates closely tracking their forced-choice counterparts (Appendix Figure \ref{fig:mod_policy}).

\paragraph{Study 2} Figure \ref{fig:mod_cand} (right panel) shows the AMCE estimates from the candidate experiment across $K = 2, 4, 6$ profiles. The key finding is that the substantive pattern of AMCEs is remarkably stable across conditions: coefficients are similar in sign, magnitude, and significance, despite the very different amounts of information in each vignette. Standard errors decrease substantially with $K$: relative to the $K=2$ baseline, mean clustered SEs are 42\% smaller at $K=4$ and 55\% smaller at $K=6$. These are attenuated relative to the theoretical predictions from Proposition~\ref{prop:variance} (47\% and 61\% respectively), consistent with the increasing intra-cluster correlation as the rank expansion generates more dependent rows per subject.

Contrary to H5, we do not find evidence that increasing the number of profiles introduces substantial noise into the AMCE estimates. To assess this formally, we follow our pre-registered Z-tests comparing each coefficient across condition pairs: of 20 coefficients, none differ significantly between $K = 2$ and $K = 4$, and only 2 differ between $K = 2$ and $K = 6$ (at $p < 0.05$). This stability of point estimates suggests that, at least up to $K = 6$, the additional cognitive burden of ranking more profiles does not systematically bias the marginal effect estimates. Moreover, the efficiency gain is not merely an artifact of showing respondents more profiles: comparing models fit on the \textit{same number of profiles per respondent}--1 round of $K = 6$ versus 3 rounds of $K = 2$ (both using 6 profiles)--the ranked-choice design yields standard errors that are 30\% smaller. This gain comes from how many pairwise comparisons each profile participates in (every ranked profile enters $K-1$ comparisons, each forced-choice profile enters 1), rather than from middle-ranked pairs carrying qualitatively different preference signal. The structural analysis in Appendix Table~\ref{tab:sconjoint} confirms that holding the number of pairs constant, ranking and forced-choice pairs are equivalent in kind.

\subsection{Assumption validity}
\label{sec:assumptions}

Study 2 allows us to directly test the transitivity and IIA assumptions that underpin the rank expansion. Figure \ref{fig:violations} plots the violation rates by condition. For transitivity, the baseline violation rate in the $K=2$ condition--which reflects general measurement noise from re-presenting a forced-choice comparison--is approximately 13\%. This increases significantly to around 22\% for $K = 4$ and $K = 6$ ($p < 0.001$ for both comparisons against $K = 2$), but does not differ between $K = 4$ and $K = 6$ ($p = 1$).

For IIA violations, the baseline rate is higher at approximately 19\% for $K = 2$, rising to 22\% for $K = 4$ ($p = 0.064$) and 25\% for $K = 6$ ($p < 0.001$ versus $K = 2$). Again, the difference between $K = 4$ and $K = 6$ is not statistically significant ($p = 0.164$).

\begin{figure}
    \centering
    \includegraphics[width = 0.48\textwidth]{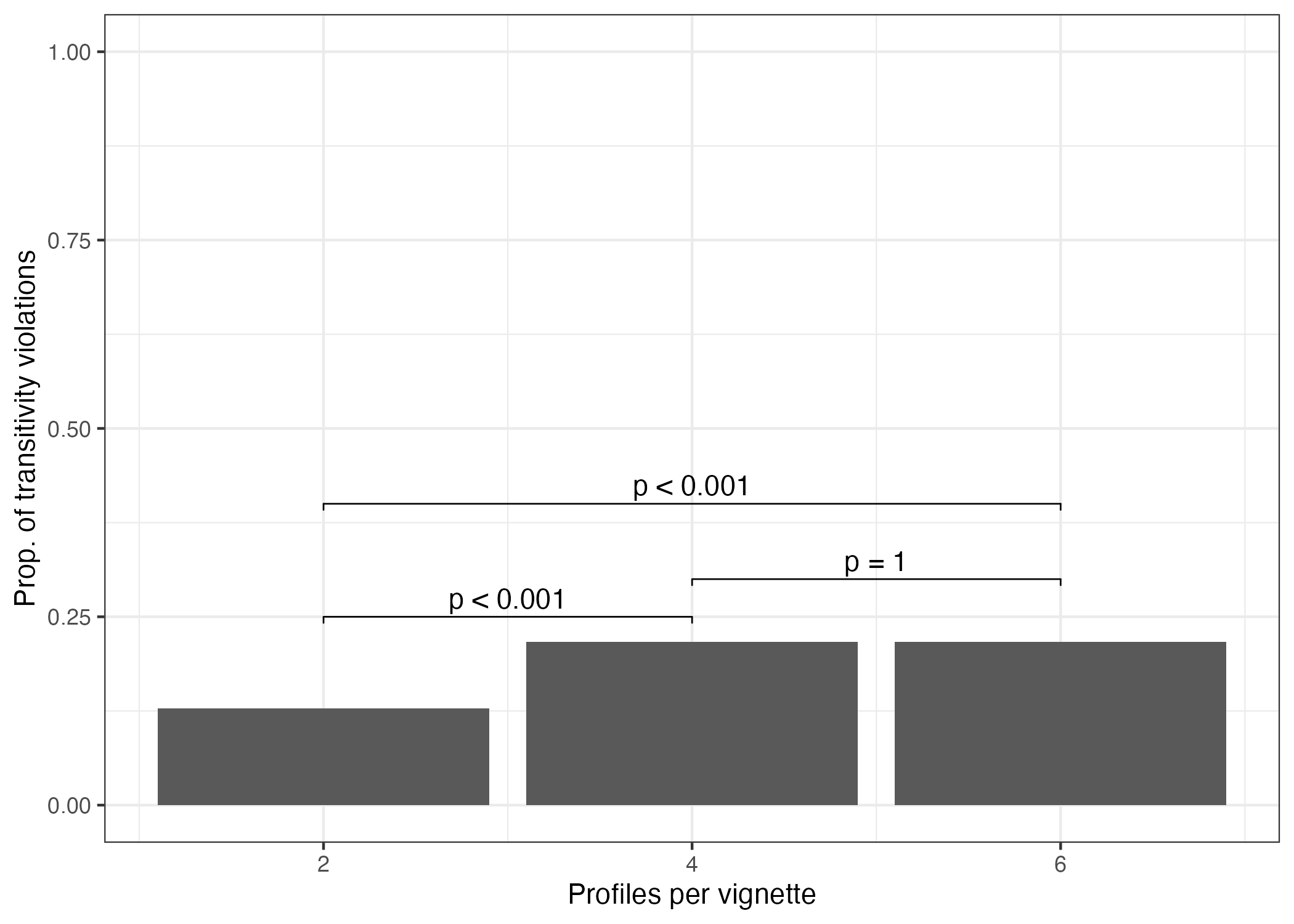}
    \includegraphics[width = 0.48\textwidth]{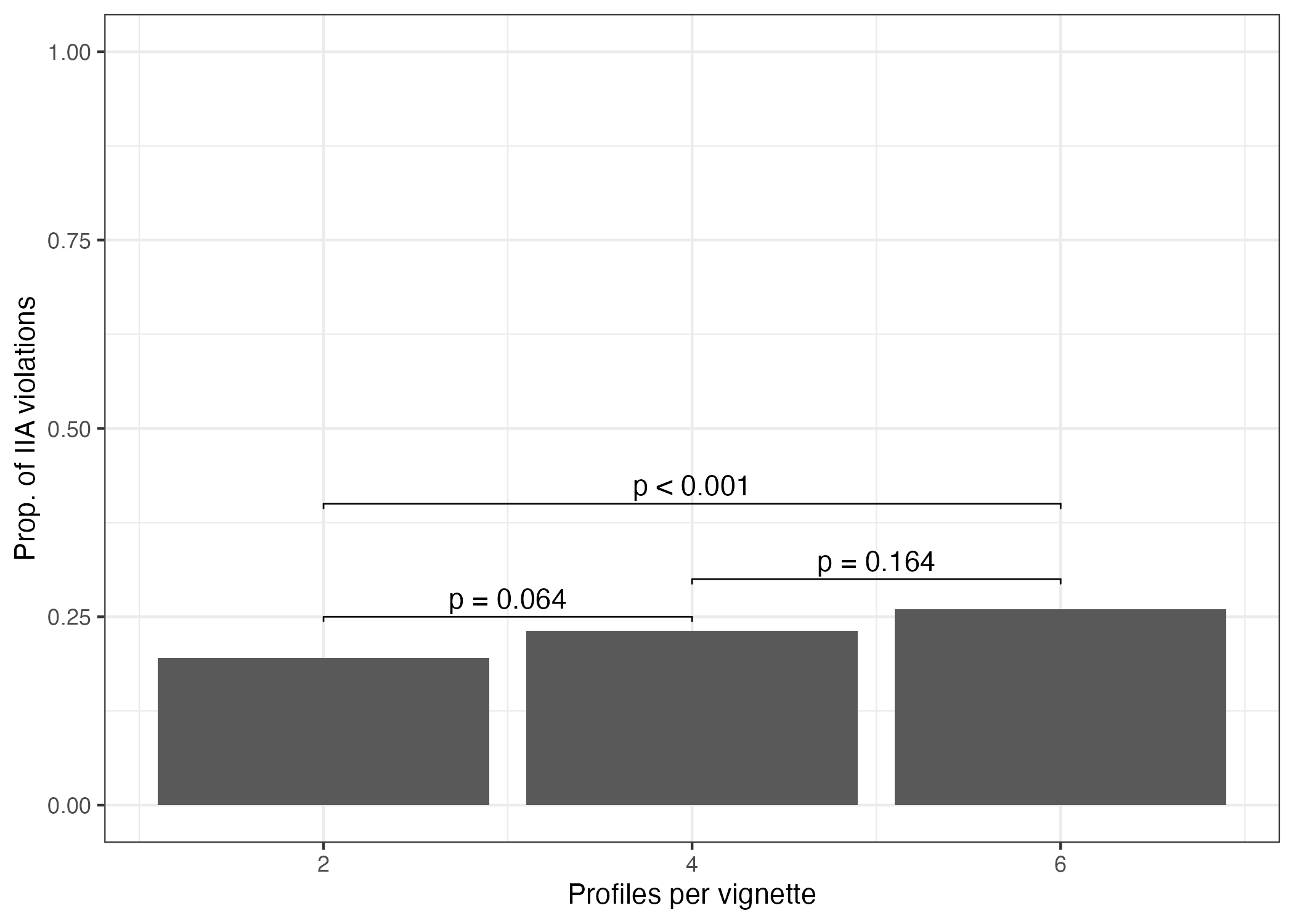}
    \caption{Study 2: Proportion of transitivity (left) and IIA (right) violations by number of profiles per vignette. Brackets show $p$-values from proportion tests.}
    \label{fig:violations}
\end{figure}

Two findings are noteworthy. First, even the forced-choice baseline ($K=2$) exhibits non-trivial violation rates, suggesting that a meaningful share of measurement noise is intrinsic to the conjoint setup rather than specific to ranking. Reframed as test-retest reliability, the $K=2$ rates imply that standard forced-choice conjoint has approximately 87\% (transitivity) and 81\% (IIA) reliability when the same comparison is re-presented. This is broadly consistent with the intra-respondent reliability documented by \citet{clayton2023correcting} across eight replicated conjoint studies, and provides a useful field-specific benchmark. \citeauthor{clayton2023correcting} call for further measurement error research on non-binary conjoint designs including rankings; our results suggest that the additional measurement error introduced by ranking is modest relative to the baseline noise already present in forced-choice designs.

Second, while violations increase with $K$, they do so modestly and plateau between $K = 4$ and $K = 6$. Importantly, re-estimating the AMCEs after excluding subjects who exhibited either violation yields nearly identical coefficient estimates (Appendix Figure \ref{fig:mod_noviolations}), indicating that these violations do not materially bias the AMCE.

We examine individual-level predictors of violations in Appendix Table~\ref{tab:violation_predictors}; notably, completion time does not predict violations, indicating they are not simply a consequence of rushing. A simulation exercise (Appendix Figure \ref{fig:sensitivity}) confirms that at the empirically observed violation rate of approximately 22\%, the mean absolute deviation of AMCEs from the uncorrupted baseline is small (approximately 0.026 percentage points), reinforcing the robustness of the estimates.

\subsection{Recovering preferences}

A central question is whether the rank outcome captures richer preference information than the binary forced-choice outcome. We assess this through intensity ratings and out-of-sample prediction.

\paragraph{Intensity ratings} In Study 2, subjects rated each profile on a 0-100 scale indicating how likely they would vote for that candidate--a continuous measure of preference independent of the ranking task.\footnote{This analysis was pre-registered as exploratory. See the Study 2 pre-analysis plan, exploratory analysis \#2.} At the aggregate level, the AMCEs estimated from choice/rank data correlate strongly with those estimated from the intensity ratings across all conditions: $r = 0.92 \ (\text{SE} = 0.091)$ for $K = 2$, $r = 0.96 \ (\text{SE} = 0.068)$ for $K = 4$, and $r = 0.98 \ (\text{SE} = 0.044)$ for $K = 6$ (Appendix Figure \ref{fig:int_vs_amce}). Neither design's AMCE-derived importance scores are consistently closer to self-reported attribute rankings (see Appendix Section~\ref{sec:pre-reg}).

At the individual level, the within-subject correlation between the normalized rank outcome and the intensity rating increases from 0.47 ($K = 2, \ \text{SE} = 0.010$) to 0.60 ($K = 4, \ \text{SE} = 0.007$) to 0.65 ($K = 6, \ \text{SE} = 0.006$). This reflects the core measurement advantage of ranking: the forced-choice outcome is necessarily binary, whereas the normalized rank has $K$ levels, each of which captures genuine gradations in preference intensity. To verify that this is not merely a statistical artifact, we computed the correlation between a binarized top-choice indicator and the intensity rating across conditions. This correlation is stable (0.47 [SE $= 0.010$], 0.46 [SE $= 0.007$], 0.43 [SE $= 0.006$]) confirming that ranking does not \textit{degrade} the quality of the most important comparison--while the intermediate ranks carry genuine preference content that the binary outcome does not collect within a single vignette. Figure~\ref{fig:rank_intensity} visualizes this pattern across conditions. This is the measurement-level intuition behind Proposition \ref{prop:rankequiv}: the normalized rank $\tilde{R}$ averages over $K-1$ pairwise comparisons per profile, and the intensity data validates that each of those additional comparisons carries real preference information.

\paragraph{Out-of-sample prediction} We also train gradient-boosted tree models on each condition's conjoint data, including pre-treatment covariates, and evaluate out-of-sample prediction on post-treatment profiles.\footnote{Models are tuned via Optuna's TPE sampler with 10-fold cross-validation and 1000 hyperparameter samples. We use the area under the ROC curve (AUC) as our classification metric and DeLong's test for comparing AUC scores across conditions.} We conduct two types of this out-of-sample test.

First, across both studies, we show one fixed profile to all subjects, which was constructed to be hard to decide over (for example, the candidate is described as an Independent, and so strong partisan cues are unavailable). Figure \ref{fig:oos} plots each experimental condition's predictive performance over this profile, varying the number of profiles. In Study 1 (Table~\ref{tab:roc}), the ranked-choice model substantially outperforms the forced-choice model in the candidate experiment. In the policy experiment, the ranked-choice model does not outperform forced choice, consistent with the weaker efficiency gains observed in that domain. In Study 2, $K = 4$ yields the highest AUC on fixed test profiles in the same candidate experiment (0.627 versus 0.448 for $K = 2$; $p < 0.001$).\footnote{We note that these out-of-sample predictions are made against a binary test outcome (vote yes/no for a fixed profile), which is itself subject to the swapping error documented by \citet{clayton2023correcting}.} The modest AUC levels across all conditions (0.5-0.7) are consistent with substantial measurement noise in both the training and test data; our key finding is the relative comparison across conditions rather than the absolute prediction accuracy.\footnote{The AUC patterns are not model-dependent: logistic regression yields qualitatively similar results across conditions (Appendix Table~\ref{tab:roc_logistic}).}

\begin{figure}
    \centering
    \includegraphics[width=0.95\linewidth]{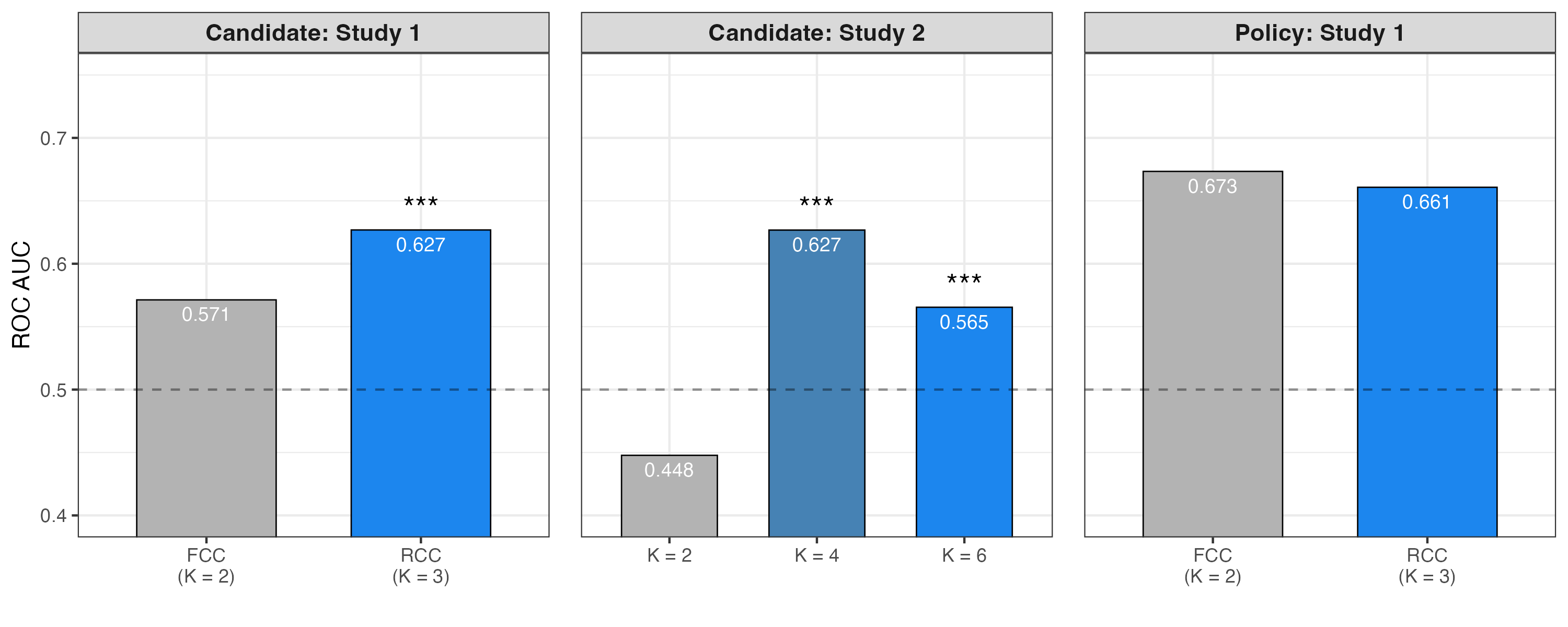}
    \caption{Out-of-sample classification performance (ROC-AUC) across both studies. The first two panels indicate the performance in the candidate conjoint context; the last panel shows the results in the policy conjoint context.}
    \label{fig:oos}
\end{figure}

In our second study, we also presented another post-treatment test profile but where the values of all the attributes were randomized--allowing us to assess the predictive performance over the span of the attributes in the candidate study. In average terms, we find the ranked-choice $K=4$ model performs significantly better than the forced-choice design, although the absolute size of this difference is small (see Appendix Table \ref{tab:roc_followup}). More interestingly, as shown in Figure \ref{fig:oos_reg}, ranking yields the largest predictive gains for profiles that are hardest to classify from forced-choice data alone.

\begin{figure}
    \centering
    \includegraphics[width=0.8\linewidth]{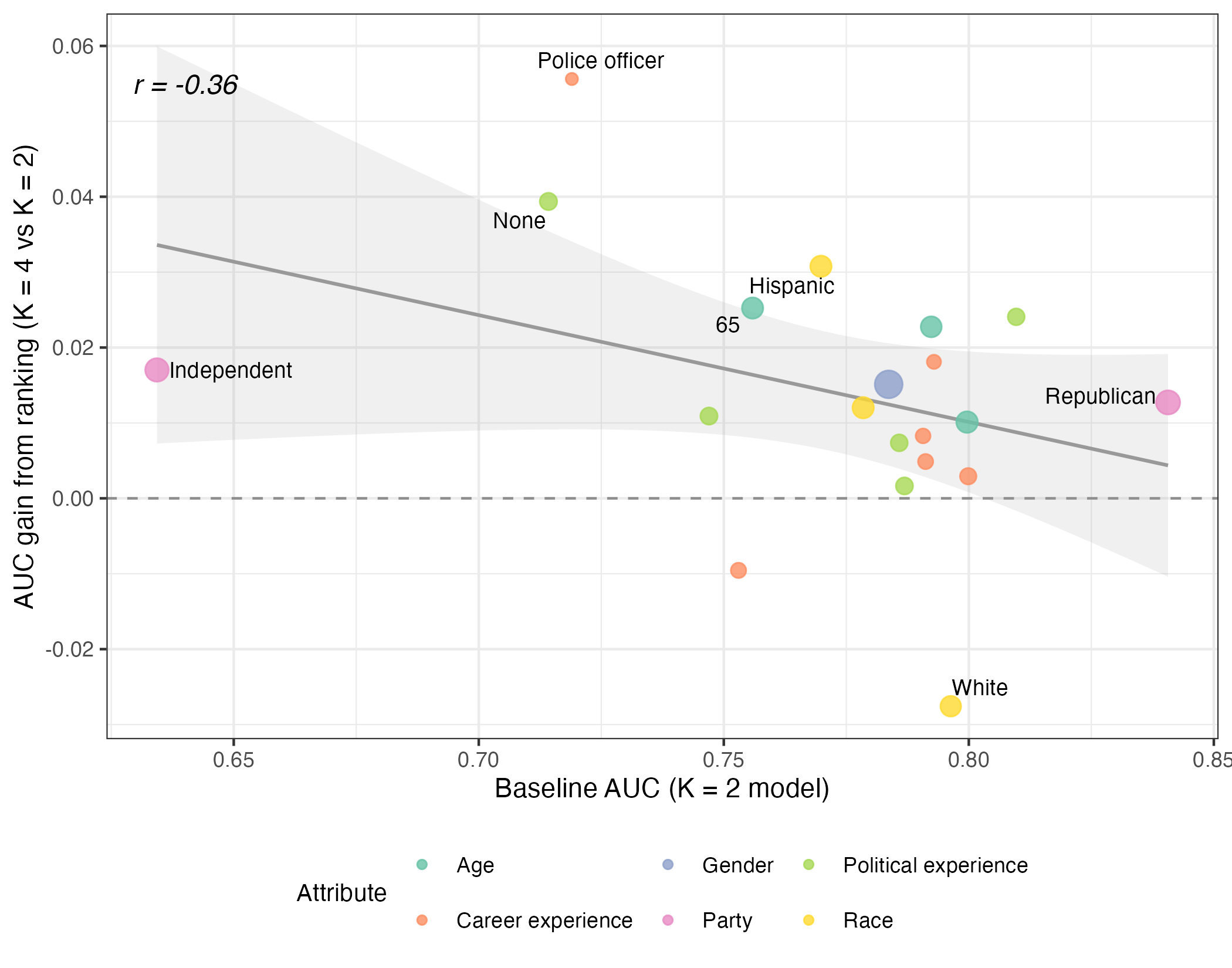}
     \caption{AUC gain from ranking ($K = 4$ versus $K = 2$) as a function of baseline prediction difficulty, by attribute-level. Each point represents a subset of the Study 2 randomized test profiles sharing a given attribute level (e.g.\ Party = Independent, Race = Hispanic); point size is proportional to subset size. The $x$-axis shows the AUC achieved by the forced-choice ($K = 2$) model on that subset, and the $y$-axis shows the improvement when using the ranked-choice ($K = 4$) model instead.}
    \label{fig:oos_reg}
\end{figure}

As a robustness check using a fundamentally different estimation strategy, we also fit the structural deep-learning random-utility model of \citet{acharya2026learningpreferencesconjointdata} to the rank-expanded pairs in Study 2, which confirms that the predictive gain reflects pair quantity rather than a qualitatively different kind of preference signal (see Appendix \ref{sec:sconjoint} for details).

\section{Discussion}

In this paper, we introduce how ranked choices can be incorporated into conjoint experiments. This work builds on recent contributions in political science that point to the direct relevance, and potential performance benefits, of using rankings in the study of political phenomena \citep[e.g.,][]{kaufman2021measure, atsusaka2022causal,atsusaka_2023}. Under a faithful-ranking assumption and an expansion of ranks back to implicit pairwise comparisons, we preserve the design mechanics of the traditional conjoint experiment while collecting substantially more information from the same number of randomized profiles. We formalize this advantage in Proposition~\ref{prop:variance}, which shows that the theoretical SE reduction grows rapidly with the number of profiles: from 33\% at $K=3$ to 61\% at $K=6$.

Across both studies, the AMCEs estimated from ranked-choice data are reassuringly consistent with their forced-choice counterparts, and are more precisely estimated. The practical value of this efficiency depends on the type of constraints a researcher faces.  When \textit{sample size} is the bottleneck--as is increasingly the case with probability-based panels where per-respondent costs are high--the precision gains translate directly into sample-size savings. The cost-effectiveness of ranked-choice designs is especially attractive given rising concerns about the quality and cost of convenience samples for survey experiments \citep{freese2025online}. Ranked-choice conjoints can deliver equivalent precision with fewer respondents, or greater precision with the same sample size--both of which reduce per-respondent costs and make higher-quality probability samples more feasible for conjoint research.\footnote{We note our evidence comes from Prolific convenience samples; whether violation rates and time costs transfer to probability panels is an open question.}

When \textit{survey time} is the binding constraint, ranking tasks may take longer per round, but subjects complete fewer rounds for the same number of total profiles (33\% fewer in Study 1). The trade-off between time and precision works in ranked conjoints' favor: ranked-choice designs deliver 32-42\% more precision per minute of survey time at $K = 4$ and $K = 6$ compared to $K = 2$. Even when comparing designs that evaluate the \textit{same number of total profiles}, ranking yields 30\% smaller standard errors, confirming that the efficiency gain is not merely a consequence of more observations but reflects how many pairwise comparisons each vignette yields. By contrast, achieving similar gains by adding more forced-choice rounds would increase respondent fatigue and exposure to carryover effects \citep{Ham_Imai_Janson_2024}.

Our intensity rating analysis validates the measurement component of this efficiency story directly: AMCEs from choice/rank data correlate strongly ($r > 0.92, \ [\text{SE} < 0.10]$) with those from independent 0-100 intensity ratings, and the binarized top-choice signal is equally strong regardless of $K$, confirming that ranking does not degrade the most important comparison. The additional pairwise comparisons that ranking collects per vignette carry genuine preference content, rather than noise--and a structural random-utility estimator (Appendix Table~\ref{tab:sconjoint}) confirms that these extra comparisons behave as equivalent in kind to forced-choice pairs, but are more numerous. An alternative would be to collect continuous ratings directly, avoiding the transitivity and IIA assumptions. However, rankings preserve the choice-based framework that AMCEs are designed for. Rating scales may also introduce scale-use heterogeneity and anchoring effects. Empirically, AMCEs from intensity ratings closely track those from rank-expanded data (Appendix Figure \ref{fig:amce_intensity}); we view the two approaches as complementary.

A key concern with ranking more profiles is whether respondents can faithfully rank them. Our bespoke tests of transitivity and IIA show that violation rates increase modestly with $K$ (Section \ref{sec:assumptions}), but the key question is whether these violations distort the estimates. Panel (B) of Figure \ref{fig:cost_benefit} shows they do not: the mean absolute AMCE deviation when excluding all violators is negligible at every value of $K$, confirming that the rank-expanded estimates are robust to the empirically observed levels of transitivity and IIA violations. The baseline forced-choice condition ($K = 2$) itself exhibits approximately 13\% transitivity violation rates when comparisons are re-presented, establishing a useful benchmark for the test-retest reliability of conjoint experiments more generally. Together, these patterns suggest that $K = 4$ offers the most attractive balance of precision, validity, and respondent burden for most applications.

Figure \ref{fig:cost_benefit} summarizes the key trade-offs across values of $K$, including the sample-size multipliers implied by the empirical SE reductions.

\begin{figure}[tbp]
    \centering
    \includegraphics[width = \textwidth]{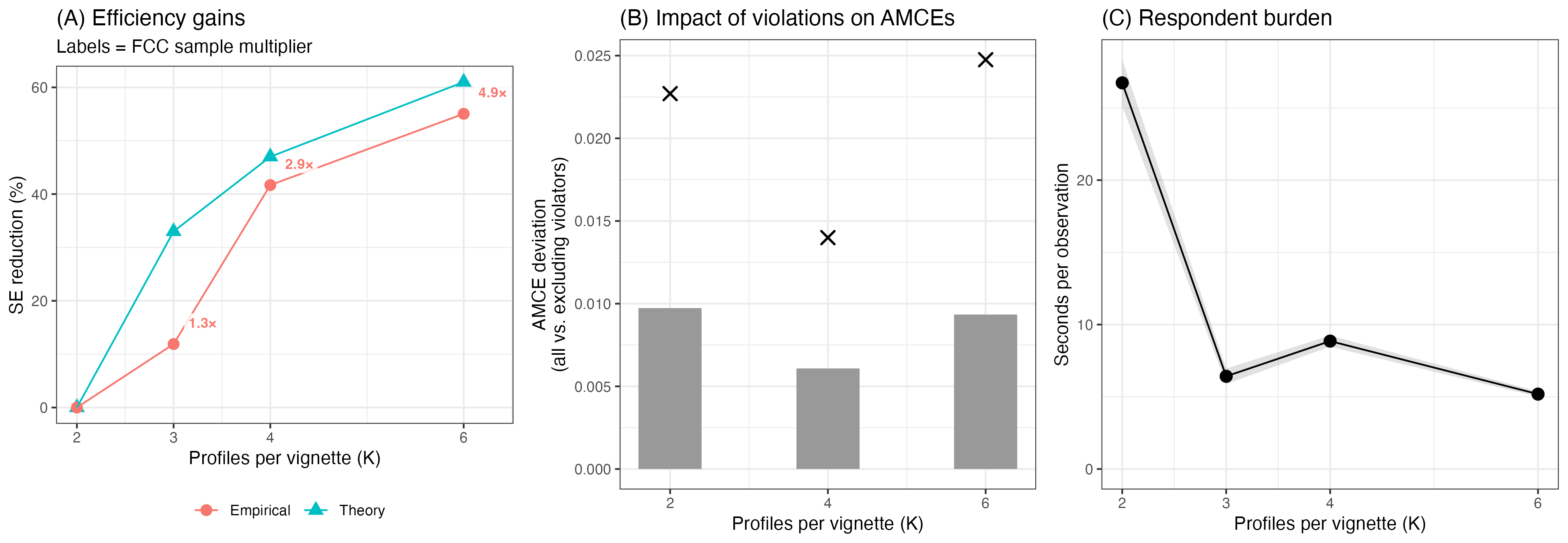}
    \caption{Summary of trade-offs across values of $K$. Panel (A) shows theoretical (Proposition~\ref{prop:variance}) and empirical SE reductions relative to $K = 2$; bold labels indicate the FCC sample-size multiplier (number of forced-choice respondents needed per ranked-choice respondent to achieve equivalent precision). Panel (B) shows the mean absolute deviation in AMCE estimates when subjects who violated transitivity or IIA are excluded, with crosses marking the maximum single-coefficient deviation; deviations are negligible at all values of $K$. Panel (C) shows mean time per effective observation. Data at $K = 3$ are from Study 1; all other values from Study 2.}
    \label{fig:cost_benefit}
\end{figure}

An informative feature of our results is the difference in performance across political contexts, which offers practical guidance for implementation. In the candidate experiment, we see clear efficiency and predictive advantages for ranking--unsurprisingly, given the direct analogue to real electoral systems where voters routinely compare and rank candidates. The budget policy experiment, by contrast, yields efficiency gains but weaker predictive improvements; subjects were also slower, suggesting that the symmetric, abstract attribute structure of budget items makes the ranking task more demanding. We view this variation as a useful guide for researchers rather than a limitation: ranked-choice designs are likely to perform best when profiles represent familiar objects that subjects naturally evaluate comparatively (candidates, consumer products, housing options) and may offer less benefit when profiles are defined by symmetric, interchangeable dimensions. We recommend that researchers considering ranked-choice designs assess the naturalness of the ranking task in their specific domain, and pilot with a small sample to verify that the additional cognitive demand does not introduce excessive noise.

To support researchers in implementing ranked-choice conjoint designs, we provide the \textbf{cjrank} \texttt{R} package \citep[available at][]{cjrank2026}. The package provides an end-to-end workflow, including diagnostic tests of the critical assumptions our method relies on. The package is designed to integrate with existing conjoint analysis workflows, requiring only that researchers supply long-format data with a rank column.

\newpage







\bibliographystyle{agsm}
\bibliography{cjoint}

\appendix
\newpage
\begin{center}
    \section*{\huge Online Appendices for \\ Ranked-choice conjoint experiments}
    \large Thomas S. Robinson, Mats Ahrenshop, and Spyros Kosmidis
\end{center}
\pagenumbering{roman}   

\addcontentsline{toc}{section}{Appendices}
\renewcommand{\thesection}{A\arabic{section}}
\renewcommand{\thefigure}{A\arabic{figure}}
\renewcommand{\thetable}{A\arabic{table}}
\renewcommand{\theHfigure}{A\arabic{figure}}
\renewcommand{\theHtable}{A\arabic{table}}
\setcounter{figure}{0}
\setcounter{table}{0}
\setcounter{page}{1}

\section{Proof of unbiasedness of AMCE under rank expansion}
\label{sec:proof}

We prove that, under two additional assumptions, the AMCE estimated from rank-expanded data is identical to the AMCE that would be obtained from a forced-choice conjoint. Throughout, we adopt the notation introduced in the main text.

\subsection{Setup and notation}

Subject $i$ is shown $K$ profiles in a single round of a conjoint experiment. Each profile $p$ is characterized by an attribute vector $X_p = (X_{p1},\ldots,X_{pL})$, where $X_{pl}$ denotes the realized level of attribute $l$. Subject $i$ derives latent utility
$$
U_{ip} = \beta^\intercal X_p + \epsilon_{ip},
$$
where $\beta$ is a common preference vector and $\epsilon_{ip}$ is a mean-zero stochastic component satisfying $\mathbb{E}[\epsilon_{ip} \mid X_p] = 0$. We assume throughout that all attribute levels are independently and uniformly randomized across profiles, subjects, and rounds.

Let $\mathbf{P}_i = \{p_1,\ldots,p_K\}$ denote the set of profiles shown in a given round. In a \textit{forced-choice} design ($K=2$), the observed outcome is:
$$
Y^{\text{FC}}_{ipp'} = \mathbf{1}(U_{ip} > U_{ip'}).
$$
\citet{Hainmueller2014} show that, under independent randomization of attributes, OLS applied to forced-choice data identifies the AMCE. We take this result as given.

In a \textit{ranked-choice} design ($K \geq 2$), the subject provides a complete ranking $R_i : \mathbf{P}_i \to \{1,\ldots,K\}$, where $R_{ip} = 1$ denotes the most-preferred profile and $R_{ip} = K$ the least-preferred. Rank expansion constructs $2\binom{K}{2}$ pairwise observations from each round:
$$
\tilde{Y}_{ipp'} = \mathbf{1}(R_{ip} < R_{ip'}), \qquad \forall\; p \neq p' \in \mathbf{P}_i.
$$

\subsection{Assumptions}

\begin{assumption}[Faithful ranking]
\label{ass:transitive}
For each subject $i$ and round, the reported ranking is a faithful representation of the latent utility ordering. That is, for all $p, p' \in \mathbf{P}_i$:
$$
R_{ip} < R_{ip'} \iff U_{ip} > U_{ip'} \qquad \text{a.s.}
$$
\end{assumption}

This assumption requires that subjects rank profiles in accordance with their latent utilities. It implies, and is stronger than, the transitivity of preferences: if $U_{ip} > U_{ip'} > U_{ip''}$, then $R_{ip} < R_{ip'} < R_{ip''}$. Note that rankings are total orders by construction, so they cannot themselves be intransitive. The assumption can nevertheless fail if stochastic choice noise or inattentive responding causes a subject's reported ranking to deviate from their latent utility ordering--for example, if $U_{ip} > U_{ip'}$ but the subject reports $R_{ip} > R_{ip'}$. We discuss empirical tests of this assumption in the main text.

\begin{assumption}[Pairwise independence / IIA]
\label{ass:iia}
The pairwise comparison between any two profiles $p$ and $p'$ is independent of the attributes of all other profiles in the vignette:
$$
\Pr(R_{ip} < R_{ip'} \mid X_p, X_{p'}, X_{-(pp')}) = \Pr(R_{ip} < R_{ip'} \mid X_p, X_{p'}),
$$
where $X_{-(pp')}$ denotes the attribute vectors of the remaining $K-2$ profiles.
\end{assumption}

This is the independence of irrelevant alternatives (IIA) condition applied to rankings within a conjoint vignette. It requires that the relative ordering of any two profiles depends only on their own attributes, not on what other options are available. Section~\ref{sec:iia_violation} below discusses the consequences of violations.

\subsection{Main results}

\begin{lemma}[Pairwise equivalence]
\label{lem:pairwise}
Under Assumption~\ref{ass:transitive},
$$
\tilde{Y}_{ipp'} = Y^{\text{FC}}_{ipp'} \qquad \text{a.s.}
$$
That is, the pairwise outcome derived from the ranking equals the outcome that would obtain under a direct forced choice between the same two profiles.
\end{lemma}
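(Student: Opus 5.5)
The argument is a direct unwinding of definitions combined with Assumption~\ref{ass:transitive}. I would fix a subject $i$, a round, and an ordered pair of distinct profiles $p \neq p' \in \mathbf{P}_i$, and compare the two indicators pointwise on the event where the faithful-ranking equivalence holds. By construction, $\tilde{Y}_{ipp'} = \mathbf{1}(R_{ip} < R_{ip'})$. Likewise $Y^{\text{FC}}_{ipp'} = \mathbf{1}(U_{ip} > U_{ip'})$, where $U_{ip}$ and $U_{ip'}$ are the same latent utilities that govern the ranking task.

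Let $\Omega_0$ be the probability-one event on which $R_{ip} < R_{ip'} \iff U_{ip} > U_{ip'}$ holds simultaneously for all pairs in $\mathbf{P}_i$. This is a finite intersection of almost-sure events, so it still has probability one. On $\Omega_0$ the events $\{R_{ip} < R_{ip'}\}$ and $\{U_{ip} > U_{ip'}\}$ coincide. Two indicator functions of coinciding events are equal, so $\tilde{Y}_{ipp'} = Y^{\text{FC}}_{ipp'}$ on $\Omega_0$, which is the claim.

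The main point needing care is the implicit modelling step: the forced-choice outcome is a potential outcome from a hypothetical two-profile task, while the rank comes from the $K$-profile task. I would state explicitly that the latent utility $U_{ip}$ is a property of subject and profile that does not depend on the choice set. This is how it is written in the setup: $U_{ip} = \beta^\intercal X_p + \epsilon_{ip}$ carries no dependence on $\mathbf{P}_i$. Given that, the same $U_{ip}, U_{ip'}$ enter both indicators and the argument is immediate. The set-independence of utilities is where Assumption~\ref{ass:iia} would be needed if one wanted the weaker distributional statement without it.

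I would also remark on ties. The event $U_{ip} = U_{ip'}$ makes $Y^{\text{FC}}=0$ in both directions, while a strict ranking yields $1$ for one direction. With continuously distributed $\epsilon$ this event has probability zero, and in any case the faithful-ranking equivalence excludes it almost surely. Finally, applying the result to both orderings $(p,p')$ and $(p',p)$ covers all $2\binom{K}{2}$ expanded rows.
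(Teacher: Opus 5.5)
Your proposal is correct and matches the paper's proof. The paper's argument is exactly the chain $\tilde{Y}_{ipp'} = \mathbf{1}(R_{ip} < R_{ip'}) = \mathbf{1}(U_{ip} > U_{ip'}) = Y^{\text{FC}}_{ipp'}$ a.s., with the two outer equalities holding by definition and the middle one by Assumption~\ref{ass:transitive}. Your extra remarks make explicit what the paper leaves implicit, without changing the argument: the finite intersection of almost-sure events, the choice-set independence of $U_{ip}$, and the exclusion of ties.
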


\begin{proof}
By Assumption~\ref{ass:transitive}, for any $p, p' \in \mathbf{P}_i$:
$$
\tilde{Y}_{ipp'} = \mathbf{1}(R_{ip} < R_{ip'}) = \mathbf{1}(U_{ip} > U_{ip'}) = Y^{\text{FC}}_{ipp'} \qquad \text{a.s.}
$$
The first equality is the definition of $\tilde{Y}$. The second follows directly from Assumption~\ref{ass:transitive}. The third is the definition of $Y^{\text{FC}}$.
\end{proof}

\begin{proposition}[Unbiasedness of AMCE under rank expansion]
\label{prop:unbiased}
Under Assumptions~\ref{ass:transitive} and~\ref{ass:iia}, the OLS estimator applied to the rank-expanded dataset $\{(\tilde{Y}_{ipp'}, X_p, X_{p'})\}$ recovers the AMCE.
\end{proposition}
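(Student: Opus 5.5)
The plan is to reduce the claim to the forced-choice identification result of \citet{Hainmueller2014}, which the setup takes as given. OLS of the form in Equation~\ref{eq:lpm} identifies the AMCE as soon as two things hold for the population of observations it is fit on. First, the regressors $X_p$ must be independently and uniformly randomized, with $X_p$ independent of the attributes of the comparison profile. Second, the conditional expectation of the outcome given $(X_p, X_{p'})$ must equal the forced-choice choice probability $\Pr(Y^{\text{FC}}_{ipp'}=1 \mid X_p, X_{p'})$. So I will show that the rank-expanded dataset reproduces exactly this joint distribution of (outcome, focal attributes, comparison attributes) at the level of a single expanded row. It then follows that the population OLS coefficients, which are the probability limits of the normal equations, coincide with those from a forced-choice design.

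The key steps run in this order.
\begin{enumerate}
\item By Lemma~\ref{lem:pairwise}, each expanded outcome satisfies $\tilde{Y}_{ipp'} = Y^{\text{FC}}_{ipp'}$ almost surely. Hence $\mathbb{E}[\tilde{Y}_{ipp'} \mid X_p, X_{p'}, X_{-(pp')}] = \Pr(U_{ip}>U_{ip'} \mid X_p, X_{p'}, X_{-(pp')})$.
\item Assumption~\ref{ass:iia} removes the dependence on $X_{-(pp')}$. This leaves $\mathbb{E}[\tilde{Y}_{ipp'} \mid X_p, X_{p'}] = \Pr(Y^{\text{FC}}_{ipp'}=1 \mid X_p, X_{p'})$, which is the same conditional mean function as in a two-profile forced choice.
\item Because attributes are drawn i.i.d.\ uniformly across the $K$ profiles, every ordered pair $(p,p')$ drawn from $\mathbf{P}_i$ has the same joint distribution of $(X_p, X_{p'})$ as the two profiles of a forced-choice task. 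The ``irrelevant'' profiles are independent of this pair.
\item Combining these, the population moments $\mathbb{E}[X X^\intercal]$ and $\mathbb{E}[X \tilde{Y}]$, averaged over all $2\binom{K}{2}$ rows, equal their forced-choice counterparts. The equal weighting of rows within a vignette preserves this because each row has the same marginal distribution. Solving the normal equations then gives the forced-choice coefficient vector, which by \citet{Hainmueller2014} is the AMCE. Unbiasedness in finite samples follows under the saturated indicator specification, since the linear probability model is then correctly specified for the marginal means.
\end{enumerate}

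I expect the main obstacle to be the dependence across rows of the same vignette. Each profile appears in $K-1$ rows, and the outcomes within a vignette are jointly determined by the same ranks. I would handle this by arguing at the level of population moments, where linearity of expectation makes within-vignette correlation irrelevant to the value of $\mathbb{E}[X\tilde{Y}]$. Dependence then affects only the variance, which is the point deferred to clustering.

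A second subtlety is that the AMCE in \citet{Hainmueller2014} is defined by marginalizing over the opponent's attributes. I would check that the expanded data marginalize over the same uniform distribution of $X_{p'}$. Step 3 guarantees this, and without Assumption~\ref{ass:iia} it would fail, which is what Section~\ref{sec:iia_violation} later quantifies.
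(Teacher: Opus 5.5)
Your proposal follows the paper's proof essentially step for step: Lemma~\ref{lem:pairwise} replaces $\tilde{Y}_{ipp'}$ by $\mathbf{1}(U_{ip}>U_{ip'})$, Assumption~\ref{ass:iia} with iterated expectations drops $X_{-(pp')}$, randomization matches the joint distribution of $(X_p, X_{p'})$ to forced choice, and so OLS on the expanded rows targets the same population parameter as forced-choice OLS; your normal-equation and linearity-of-expectation remarks just make the paper's Step~4 explicit. The one thing that does not go through is your aside that finite-sample unbiasedness follows from correct specification under a saturated indicator model, because conditional on the full design $\mathbb{E}[\tilde{Y}_{ipp'} \mid \text{all regressors}]$ still depends on $X_{p'}$, which is itself a regressor in other rows of the same vignette, so strict exogeneity fails and what you (like the paper) actually establish is identification of the AMCE at the population level.
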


\begin{proof}
We show that the rank-expanded observations satisfy the same conditional moment restrictions as forced-choice data, so the identification result of \citet{Hainmueller2014} applies directly.

\textit{Step 1.} By Lemma~\ref{lem:pairwise}, $\tilde{Y}_{ipp'} = \mathbf{1}(U_{ip} > U_{ip'})$ a.s. Therefore:
$$
\mathbb{E}[\tilde{Y}_{ipp'} \mid X_p, X_{p'}, X_{-(pp')}] = \Pr(U_{ip} > U_{ip'} \mid X_p, X_{p'}, X_{-(pp')}).
$$

\textit{Step 2.} Since $\{R_{ip} < R_{ip'}\} = \{U_{ip} > U_{ip'}\}$ a.s.\ by Assumption~\ref{ass:transitive}, Assumption~\ref{ass:iia} implies that the right-hand side does not depend on $X_{-(pp')}$:
$$
\mathbb{E}[\tilde{Y}_{ipp'} \mid X_p, X_{p'}, X_{-(pp')}] = \Pr(U_{ip} > U_{ip'} \mid X_p, X_{p'}).
$$
Taking iterated expectations:
$$
\mathbb{E}[\tilde{Y}_{ipp'} \mid X_p, X_{p'}] = \Pr(U_{ip} > U_{ip'} \mid X_p, X_{p'}) = \mathbb{E}[Y^{\text{FC}}_{ipp'} \mid X_p, X_{p'}].
$$

\textit{Step 3.} By the experimental design, for any pair $(p, p')$ drawn from the ranking vignette, the attribute vectors $X_p$ and $X_{p'}$ are independently and uniformly randomized. This is the same joint distribution as for any pair of profiles in a forced-choice design.

\textit{Step 4.} Combining Steps 2 and 3: each rank-expanded observation $(\tilde{Y}_{ipp'}, X_p, X_{p'})$ has the same conditional expectation and the same marginal distribution of covariates as a forced-choice observation $(Y^{\text{FC}}_{ipp'}, X_p, X_{p'})$. Therefore, the OLS regression of $\tilde{Y}$ on $X$ identifies the same population parameter as OLS on forced-choice data--namely, the AMCE.
\end{proof}

\subsection{Efficiency of the rank-expanded estimator}

The rank expansion creates $2\binom{K}{2}$ rows per round from $K$ profiles, but these rows are highly dependent. The key insight is that each profile appears in exactly $K - 1$ pairwise comparisons, winning against each profile ranked below it. Its $K - 1$ binary outcomes therefore sum to $K - R_a$, where $R_a$ is its rank. It follows that the $(K-1)$ factor cancels in the OLS normal equations, and the rank-expanded estimator is algebraically equivalent to regressing the normalized rank $\tilde{R}_a = (K - R_a)/(K-1) \in [0, 1]$ on the profile attributes, using one observation per profile.

\begin{proposition}[Variance reduction under rank expansion]
\label{prop:variance}
\label{prop:rankequiv}
The effective sample size of the rank-expanded estimator is $NJK$ profile-level observations (not the $2\binom{K}{2}NJ$ rows in the expanded dataset). Under the null (all $K!$ orderings equally likely), the ratio of asymptotic variances of the AMCE estimator, holding $N$ and $J$ fixed, is:
$$
\frac{\textnormal{Var}(\hat{\beta}_{\textnormal{RCC}})}{\textnormal{Var}(\hat{\beta}_{\textnormal{FC}})} = \frac{2(K+1)}{3K(K-1)}.
$$
This follows from $\textnormal{Var}(\tilde{R}) = (K+1)/[12(K-1)]$ (the variance of a rescaled discrete uniform), compared with $\textnormal{Var}(Y^{\textnormal{FC}}) = 1/4$, and the $K/2$ ratio in effective observations per round. For $K = 2$, the normalized rank reduces to the binary choice indicator, nesting the standard forced-choice estimator as a special case.
\end{proposition}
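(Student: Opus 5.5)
My plan has three steps: establish an algebraic equivalence between the expanded regression and a profile-level regression, compute the two outcome variances, and combine them through the usual sandwich formula.

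\textbf{Step 1: the expanded OLS collapses to a profile-level regression.} Write the expanded design with one row per ordered pair $(p,p')$, regressor $X_p$ and outcome $\tilde{Y}_{ipp'}$. Each profile $p$ appears as the focal row exactly $K-1$ times, always with the same regressor $X_p$. Hence
$$
\sum_{(p,p')} X_pX_p^\intercal = (K-1)\sum_p X_pX_p^\intercal .
$$
Also, $\sum_{p'\neq p}\tilde{Y}_{ipp'} = K-R_{ip}$, because $p$ beats exactly the $K-R_{ip}$ profiles ranked below it. So
$$
\sum_{(p,p')} X_p\tilde{Y}_{ipp'} = (K-1)\sum_p X_p\tilde{R}_{ip}, \qquad \tilde{R}_{ip} = \frac{K-R_{ip}}{K-1}.
$$
The factor $K-1$ cancels in $(X^\intercal X)^{-1}X^\intercal Y$. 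Therefore $\hat\beta_{\text{RCC}}$ is numerically identical to OLS of $\tilde R$ on $X$ with $NJK$ profile-level rows, which gives the effective-sample-size claim. For $K=2$, $\tilde R$ equals the choice indicator, which gives the nesting claim.

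\textbf{Step 2: outcome variances under the null.} If all $K!$ orderings are equally likely, $R_{ip}$ is uniform on $\{1,\dots,K\}$, so $\text{Var}(R)=(K^2-1)/12$. Dividing by $(K-1)^2$ gives
$$
\text{Var}(\tilde R)=\frac{K+1}{12(K-1)}.
$$
The forced-choice indicator is Bernoulli$(1/2)$, so its variance is $1/4$.

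\textbf{Step 3: combine via the sandwich formula.} Under the null, $\tilde R$ is independent of $X$. The asymptotic variance of the OLS slope is then approximately $\sigma^2_{\text{out}}\,[\,n\,\Sigma_X]^{-1}$, where $n$ is the number of profile-level observations and $\Sigma_X$ is the same in both designs because attributes are randomized identically. Take $N$ and $J$ fixed, with $n_{\text{RCC}}=NJK$ and $n_{\text{FC}}=NJ\cdot 2$. The ratio is
$$
\frac{(K+1)/[12(K-1)]}{1/4}\cdot\frac{2}{K}=\frac{2(K+1)}{3K(K-1)},
$$
which is the stated formula.

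\textbf{Main obstacle: dependence across rows in Step 3.} Within a round the $\tilde R_{ip}$ are not independent: they are a permutation of fixed values, so they sum to the constant $K/2$ and are negatively correlated, with $\text{Cov}=-\text{Var}(\tilde R)/(K-1)$. I would handle this as follows. Because the $X_p$ are i.i.d.\ and independent of the ranks under the null, the cross terms in the meat of the sandwich have expectation
$$
\mathbb{E}\!\left[(X_p-\mu)(X_{p'}-\mu)^\intercal\right]\text{Cov}(\tilde R_p,\tilde R_{p'}),
$$
which is zero for $p\neq p'$ because the attributes are independent and centered. So only the diagonal terms survive. The same argument applies to within-subject clustering across rounds in the forced-choice arm, so the ratio is unaffected by clustering under the null.

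I would state explicitly that the result is a null-hypothesis, homogeneous-effects approximation. This is consistent with the paper's footnotes reporting that the empirical gains are attenuated relative to these theoretical values.
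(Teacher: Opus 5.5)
Your proposal is correct and follows essentially the paper's argument: the $(K-1)$ factor cancels in the OLS normal equations, so the expanded estimator coincides with profile-level OLS of $\tilde{R}$ on $X$ using $NJK$ rows. The ratio then comes from $\textnormal{Var}(\tilde{R})=(K+1)/[12(K-1)]$ versus $1/4$, combined with the $2/K$ ratio of profile-level observations. Your explicit check that the within-round and within-subject cross terms in the sandwich meat vanish under the null, because the attribute vectors are independent and centered, is a justification the paper leaves implicit when it treats the $NJK$ profile-level rows as carrying independent information.
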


For the values of $K$ used in this paper, the theoretical variance ratios (and corresponding SE reductions) are:
\begin{center}
\begin{tabular}{cccc}
\toprule
$K$ & Variance ratio & SE ratio & SE reduction \\
\midrule
2 & 1 & 1 & 0\% \\
3 & 4/9 & 2/3 & 33\% \\
4 & 5/18 & 0.53 & 47\% \\
6 & 7/45 & 0.39 & 61\% \\
\bottomrule
\end{tabular}
\end{center}
These figures represent an upper bound on the efficiency gain under the null, ignoring within-subject heterogeneity. In practice, subject-level heterogeneity induces additional intra-cluster correlation that reduces the effective gain, though subject-level clustering of standard errors remains consistent. The simulation results in Section~\ref{sec:sim} and the empirical results in the main text confirm that the observed efficiency gains are broadly consistent with these theoretical predictions.

\subsection{Consequences of IIA violations}
\label{sec:iia_violation}

The proof of Proposition~\ref{prop:unbiased} relies on Assumption~\ref{ass:iia} to marginalize over the remaining profiles. Without IIA, the conditional expectation of the rank-expanded outcome becomes:
$$
\mathbb{E}[\tilde{Y}_{ipp'} \mid X_p, X_{p'}, X_{-(pp')}] = h(X_p, X_{p'}, X_{-(pp')}),
$$
for some conditional expectation function $h$ that cannot, in general, be expressed solely as a function of $(X_p, X_{p'})$.

When we regress $\tilde{Y}_{ipp'}$ on $(X_p, X_{p'})$ alone, we implicitly average over the distribution of $X_{-(pp')}$. By the law of iterated expectations:
$$
\mathbb{E}[\tilde{Y}_{ipp'} \mid X_p, X_{p'}] = \mathbb{E}_{X_{-(pp')}}[h(X_p, X_{p'}, X_{-(pp')}) \mid X_p, X_{p'}].
$$
Under independent randomization, $X_{-(pp')} \perp\!\!\!\perp (X_p, X_{p'})$, so this simplifies to
$$
\mathbb{E}_{X_{-(pp')}}[h(X_p, X_{p'}, X_{-(pp')})].
$$
If $h$ is additively separable in $(X_p, X_{p'})$ and $X_{-(pp')}$, the OLS coefficient on $X_p$ is still unbiased. However, if $h$ contains interaction terms between the focal pair and the remaining profiles--for example, if the weight a subject places on a candidate's career experience depends on the partisan composition of the other profiles in the vignette--then OLS will not in general recover the AMCE, analogous to an omitted-variable bias.

Importantly, randomization bounds the scope of this bias: because the ``omitted" attributes $X_{-(pp')}$ are independent of $(X_p, X_{p'})$, IIA violations cannot introduce bias through correlation between the included and omitted regressors. Instead, the bias operates through the non-linear interaction between the focal and non-focal profiles' attributes in the function $h$. In practice, this means that mild violations of IIA--where the remaining profiles have small and approximately additive effects on pairwise comparisons--will produce negligible bias.

\section{Pre-analysis plan}
\label{sec:pre-reg}

All pre-registration material can be accessed at 
\url{https://osf.io/fawuy}.

\subsection{Hypotheses}
\label{sec:hypotheses}

Our pre-registered hypotheses, tested across both studies, are:

\begin{itemize}
\item [\textbf{H1}] \textit{Subjects complete surveys more quickly under ranked-choice than under forced-choice designs.}
\item [\textbf{H2}] \textit{Attrition and survey satisficing are lower under ranked-choice than under forced-choice designs.}
\item [\textbf{H3}] \textit{Estimation efficiency is greater under ranked-choice than under forced-choice designs.}
\item [\textbf{H4}] \textit{Accuracy of attribute importance is greater under ranked-choice than under forced-choice designs.}
\end{itemize}

\noindent Then, within Study 2 specifically, we pre-registered:

\begin{itemize}
\item [\textbf{H5}] \textit{More profiles per vignette leads to noisier AMCE estimates.}
\item [\textbf{H6}] \textit{Subjects' behavior is consistent with transitivity.}
\item [\textbf{H7}] \textit{Subjects' behavior is consistent with independence of irrelevant alternatives.} 
\end{itemize}

\subsection{Deviations from the pre-analysis plan}

The empirical analysis was followed as detailed in our pre-registered, pre-analysis plans, except for the following deviations:

\paragraph{Study 1:}
\begin{itemize}
    \item In our discussion of H3 (efficiency), we stated that RCC should yield more ``accurate" estimates. This should have read ``precise". The analysis remains as stated in the plan.
    \item For the explicit test of H4 (latent preferences), our attribute importance score was calculated using the \textit{sum} of AMCEs within each attribute. We amended this to the \textit{mean} of AMCEs in this paper, to take into account that, in the candidate context, different attributes had different numbers of levels.
    \item Under H3 (efficiency) we stated we would estimate the adjusted $R^2$ and ROC AUC scores for the models. The adjusted $R^2$ values are low across all conditions (0.03-0.08) and do not differ meaningfully between designs, consistent with the high residual variance inherent to conjoint experiments. ROC AUC scores are reported in the main text (Table~\ref{tab:roc}).
    \item For the explicit preference comparison (H4), we find that neither design's AMCE-derived importance scores are clearly closer to the self-reported explicit scores: the mean absolute deviation is 5.98 (FCC) versus 6.62 (RCC) in the candidate experiment and 3.10 versus 4.58 in the policy experiment.
\end{itemize}

\paragraph{Study 2:}
\begin{itemize}
    \item H5 (AMCE stability across conditions) was tested using both descriptive comparison and the pre-registered Z-tests. Results are reported in the main text.
    \item The pre-registered exploratory analysis of equivalent-profiles comparison is reported in the main text (Section 5.2).
    \item The pre-registered exploratory analysis of intensity ratings is reported as part of Section 5.4 in the main text.
    \item The pre-registered exploratory analysis of ML prediction models is reported alongside the intensity analysis.
\end{itemize}

\section{Further figures and tables}
\begin{figure}[htbp]
    \centering
    \includegraphics[width = 0.9\textwidth]{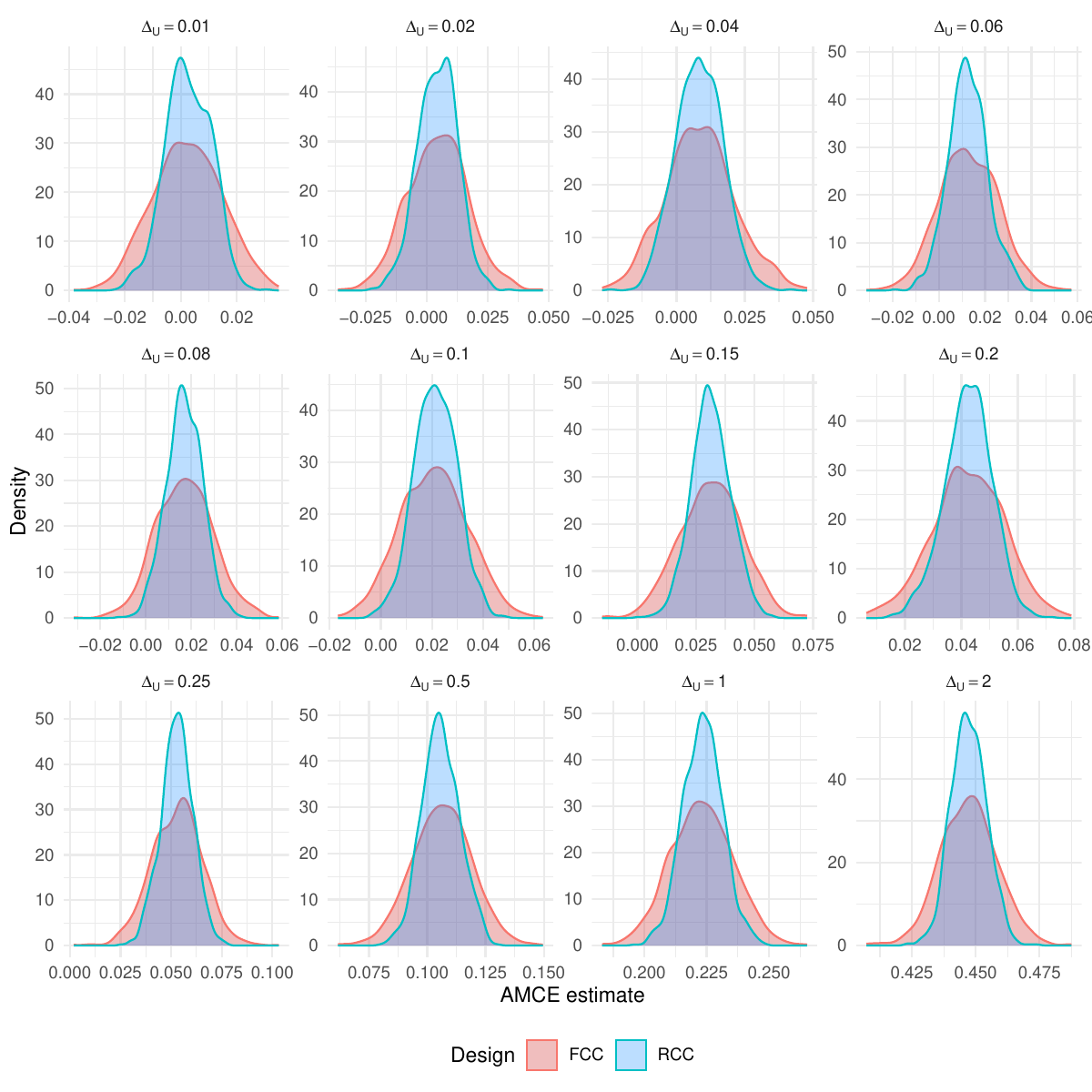}
    \caption{Density of AMCE estimates using RCC and FCC, respectively, over 1000 simulations. Each pane reflects the changes in utility (measured in standard deviations of the residual variance) for showing the same attribute-level.}
    \label{fig:amce_comparison}
\end{figure}

\begin{figure}[htbp]
    \centering
    \includegraphics[width = 0.9\textwidth]{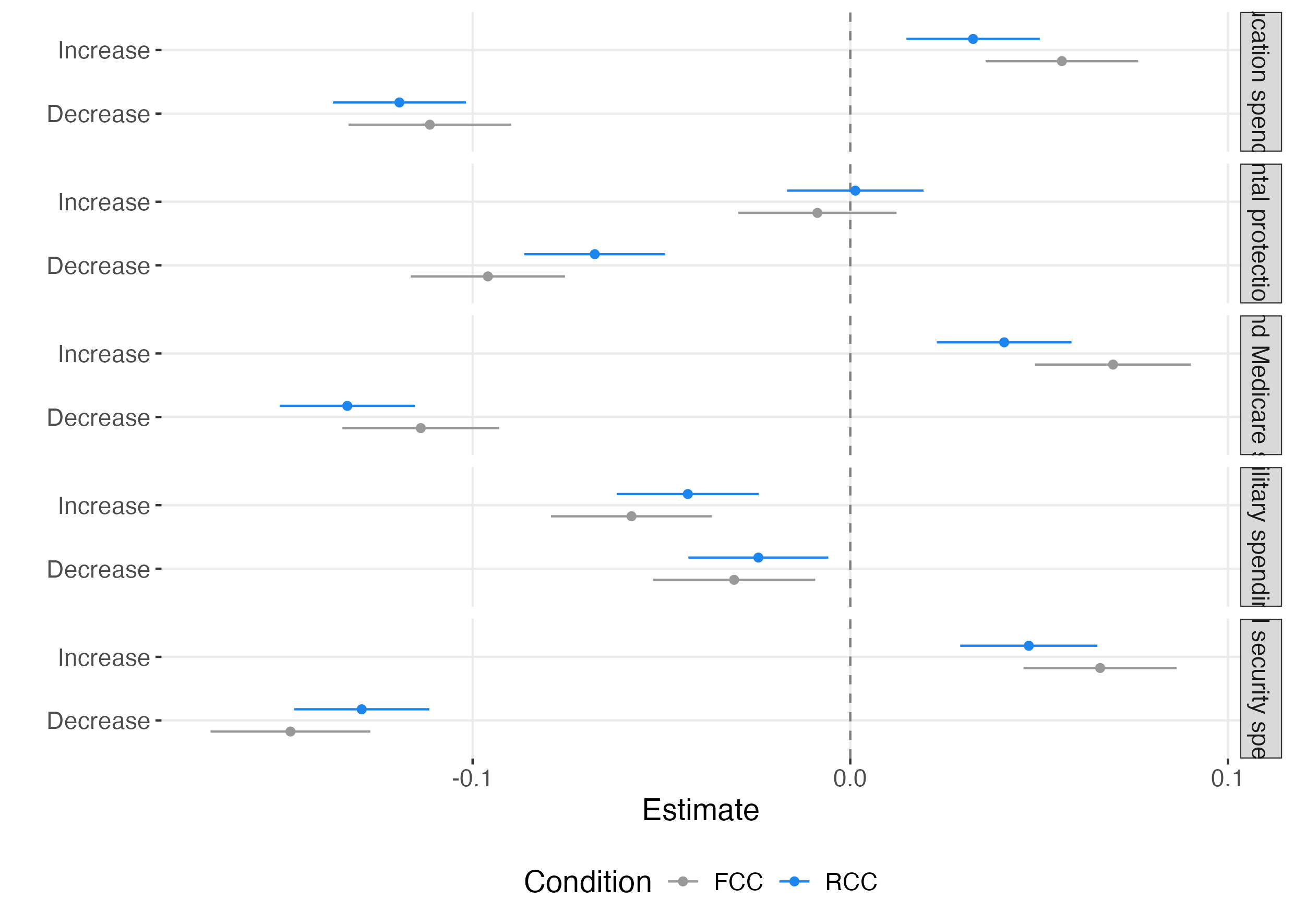}
    \caption{Study 1: Estimated AMCEs from the policy experiment.}
    \label{fig:mod_policy}
\end{figure}

\subsection{Study 2: Additional tables and figures}

\begin{table}
\begin{center}
\begin{tabular}{l c c}
\toprule
 & \multicolumn{1}{c}{Total time taken} & \multicolumn{1}{c}{Time per observation} \\
\cmidrule(lr){2-2} \cmidrule(lr){3-3}
 & Model 1 & Model 2 \\
\midrule
4 profile ranked-choice & $211.20^{***}$ & $-17.88^{***}$ \\
                        & $(13.23)$      & $(0.69)$       \\
6 profile ranked-choice & $408.51^{***}$ & $-21.55^{***}$ \\
                        & $(13.11)$      & $(0.69)$       \\
(Intercept)             & $213.90^{***}$ & $26.74^{***}$  \\
                        & $(9.30)$       & $(0.49)$       \\
\midrule
R$^2$                   & $0.26$         & $0.28$         \\
Adj. R$^2$              & $0.25$         & $0.28$         \\
Num. obs.               & $2837$         & $2837$         \\
\bottomrule
\multicolumn{3}{l}{\scriptsize{$^{***}p<0.001$; $^{**}p<0.01$; $^{*}p<0.05$}}
\end{tabular}
\caption{Effect of ranked choice (compared to forced choice) on completion time}
\label{tab:timings_followup}
\end{center}
\end{table}

\begin{table}
\begin{center}
\begin{tabular}{l c c c}
\toprule
 & \multicolumn{1}{c}{Number of attributes} & \multicolumn{1}{c}{Identified vignette} & \multicolumn{1}{c}{Dimensions} \\
\cmidrule(lr){2-2} \cmidrule(lr){3-3} \cmidrule(lr){4-4}
 & Model 1 & Model 2 & Model 3 \\
\midrule
4 profile ranked-choice & $0.02$        & $0.02$       & $0.00$       \\
                        & $(0.04)$      & $(0.02)$     & $(0.04)$     \\
6 profile ranked-choice & $-0.28^{***}$ & $0.06^{**}$  & $-0.01$      \\
                        & $(0.04)$      & $(0.02)$     & $(0.04)$     \\
(Intercept)             & $0.83^{***}$  & $0.67^{***}$ & $2.51^{***}$ \\
                        & $(0.03)$      & $(0.01)$     & $(0.03)$     \\
\midrule
R$^2$                   & $0.02$        & $0.00$       & $0.00$       \\
Adj. R$^2$              & $0.02$        & $0.00$       & $-0.00$      \\
Num. obs.               & $2837$        & $2837$       & $2837$       \\
\bottomrule
\multicolumn{4}{l}{\scriptsize{$^{***}p<0.001$; $^{**}p<0.01$; $^{*}p<0.05$}}
\end{tabular}
\caption{Effect of ranked choice (compared to forced choice) on manipulation checks}
\label{tab:fmc_followup}
\end{center}
\end{table}

\begin{table}[tbp]
    \centering
    \begin{tabular}{lcccc}
    \toprule
     & \multicolumn{2}{c}{\textbf{ROC Area Under Curve}} & & \\ \cmidrule{2-3}
     \textbf{Experiment} & \textit{FCC} & \textit{RCC} & \textbf{D} & \textbf{$p$-value} \\
     \midrule
     Candidate & 0.571 & 0.627 & 3.293 & $< 0.001$ \\
     Policy & 0.673 & 0.661 & -0.915 & $0.360$ \\
     \bottomrule
    \end{tabular}
    \caption{Study 1: Classification performance on out-of-sample conjoint profiles.}
    \label{tab:roc}
\end{table}

\begin{table}[tbp]
    \centering
    \begin{tabular}{lcc}
    \toprule
     & \multicolumn{2}{c}{\textbf{ROC Area Under Curve}} \\ \cmidrule{2-3}
     \textbf{Condition} & \textit{Fixed profile} & \textit{Random profile} \\
     \midrule
     $K = 2$ & 0.448 & 0.780 \\
     $K = 4$ & 0.627 & 0.794 \\
     $K = 6$ & 0.565 & 0.791 \\
     \bottomrule
    \end{tabular}
    \caption{Study 2: Classification performance on out-of-sample conjoint profiles by number of ranked profiles. * indicates $p < 0.05$ from DeLong's test comparing against the 2-profile baseline.}
    \label{tab:roc_followup}
\end{table}

\begin{table}[tbp]
    \centering
    \begin{tabular}{lrrc}
        \toprule
        \textbf{Specification} & \textbf{Pairs} & \textbf{Respondents} & \textbf{AUC [95\% CI]} \\      
        \midrule 
        \multicolumn{4}{l}{\textit{Panel A: full rank-expanded pairs}} \\
 $K = 2$ (FCC baseline) &  3{,}792 & 948 & 0.703 [0.669, 0.737] \\
 $K = 4$ (all pairs)    & 22{,}122 & 924 & 0.765 [0.735, 0.796] \\
 $K = 6$ (all pairs)    & 57{,}690 & 962 & 0.782 [0.753, 0.811] \\
        \midrule
        \multicolumn{4}{l}{\textit{Panel B: subsampled to 4 pairs per respondent (matched to $K = 2$)}} \\
 $K = 4$, random 1-per-task      & 3{,}687 & 924 & 0.694 [0.660, 0.729] \\
 $K = 4$, top-vs-2nd ranked only & 3{,}687 & 924 & 0.677 [0.642, 0.713] \\
 $K = 6$, random 1-per-task      & 3{,}846 & 962 & 0.706 [0.673, 0.739] \\
 $K = 6$, top-vs-2nd ranked only & 3{,}846 & 962 & 0.617 [0.581, 0.652] \\
        \bottomrule
    \end{tabular}
    \caption{Held-out randomized vignette classification from Study 2, using the structural deep-learning random-utility model of \citet{acharya2026learningpreferencesconjointdata}. Panel A uses all rank-expanded pairs per respondent in each condition; Panel B subsamples each respondent in the $K = 4$ and $K = 6$ conditions to four pairs (matching the per-respondent pair count under $K = 2$) under two schemes: one randomly chosen pair per task, and the pair formed by the top-two-ranked profiles only. Respondent-specific preference vectors $\hat{\boldsymbol{\beta}}(\mathbf{Z}_i)$ are applied to each respondent's held-out vignette profile; AUC measures how well the resulting utility score classifies the binary vote response. All models use $K = 10$ cross-fit folds, 2{,}000 training epochs, and $\mathbf{Z} = \{$gender, partisanship, ideology$\}$.}
    \label{tab:sconjoint}
\end{table}

\begin{figure}[htbp]
    \centering
    \includegraphics[width = 0.9\textwidth]{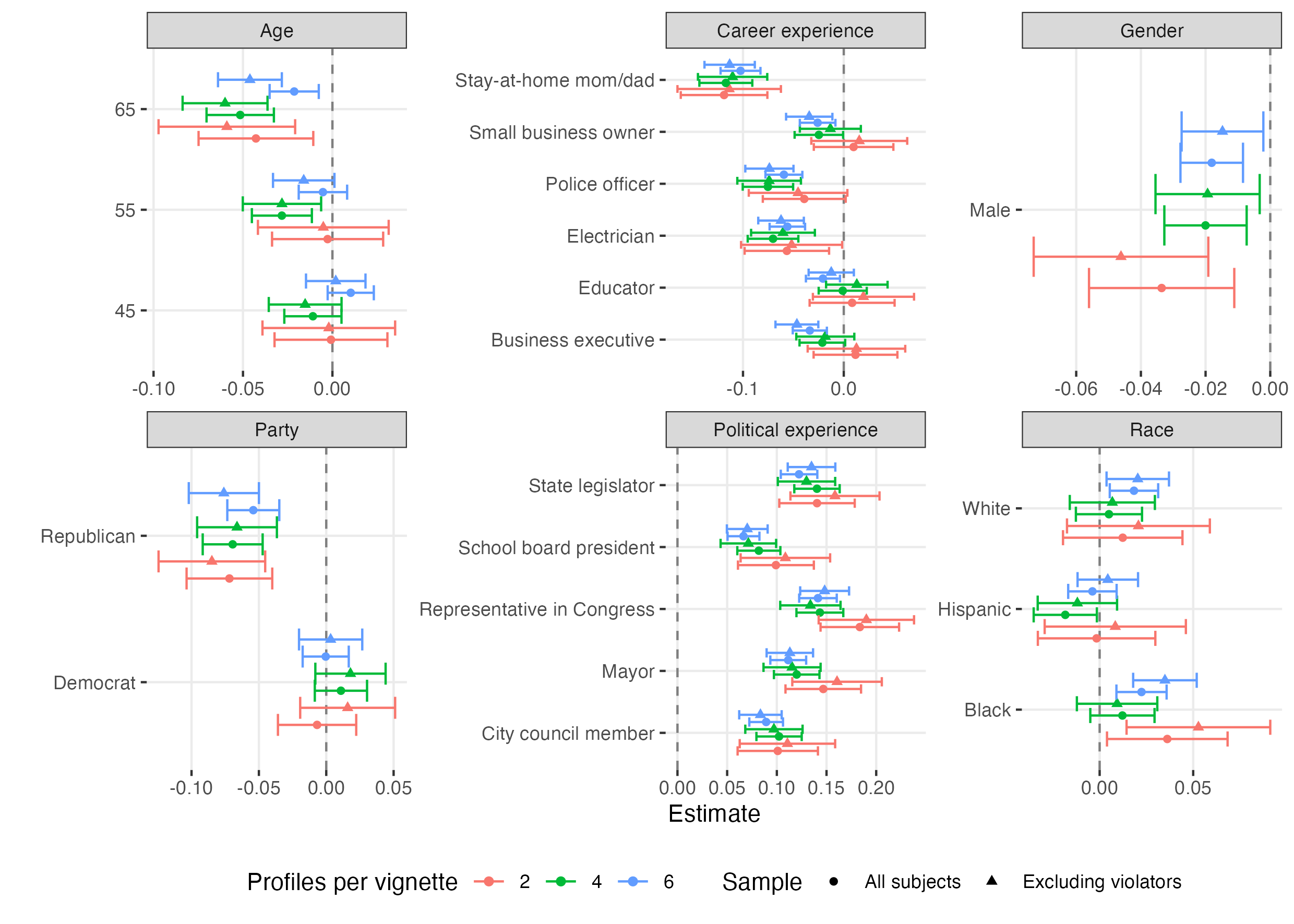}
    \caption{Study 2: AMCE estimates after excluding subjects who exhibited transitivity or IIA violations, by number of profiles per vignette.}
    \label{fig:mod_noviolations}
\end{figure}

\begin{figure}[htbp]
    \centering
    \includegraphics[width = 0.9\textwidth]{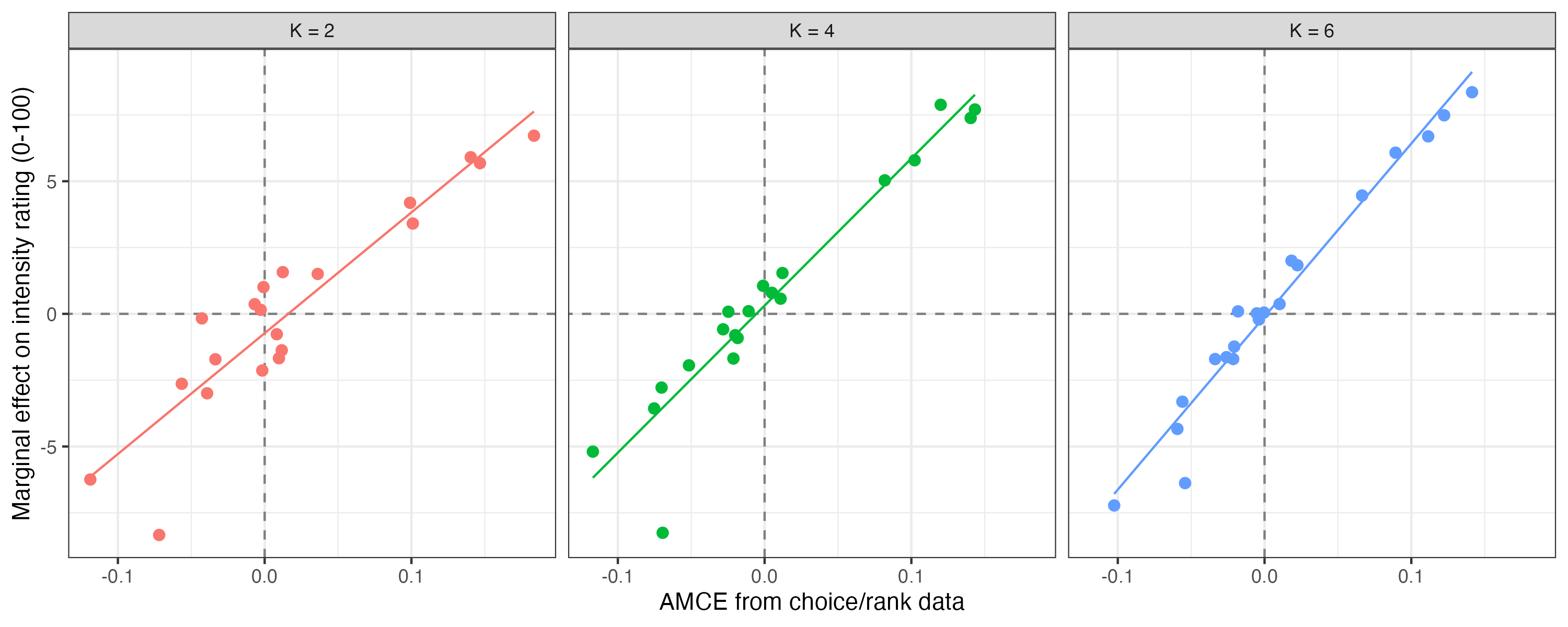}
    \caption{Study 2: Comparison of AMCEs estimated from choice/rank data (x-axis) against marginal effects on intensity ratings (y-axis). Each point is one coefficient. The correlation increases from $r = 0.92$ ($K = 2$) to $r = 0.98$ ($K = 6$).}
    \label{fig:int_vs_amce}
\end{figure}

\begin{figure}[htbp]
    \centering
    \includegraphics[width = 0.65\textwidth]{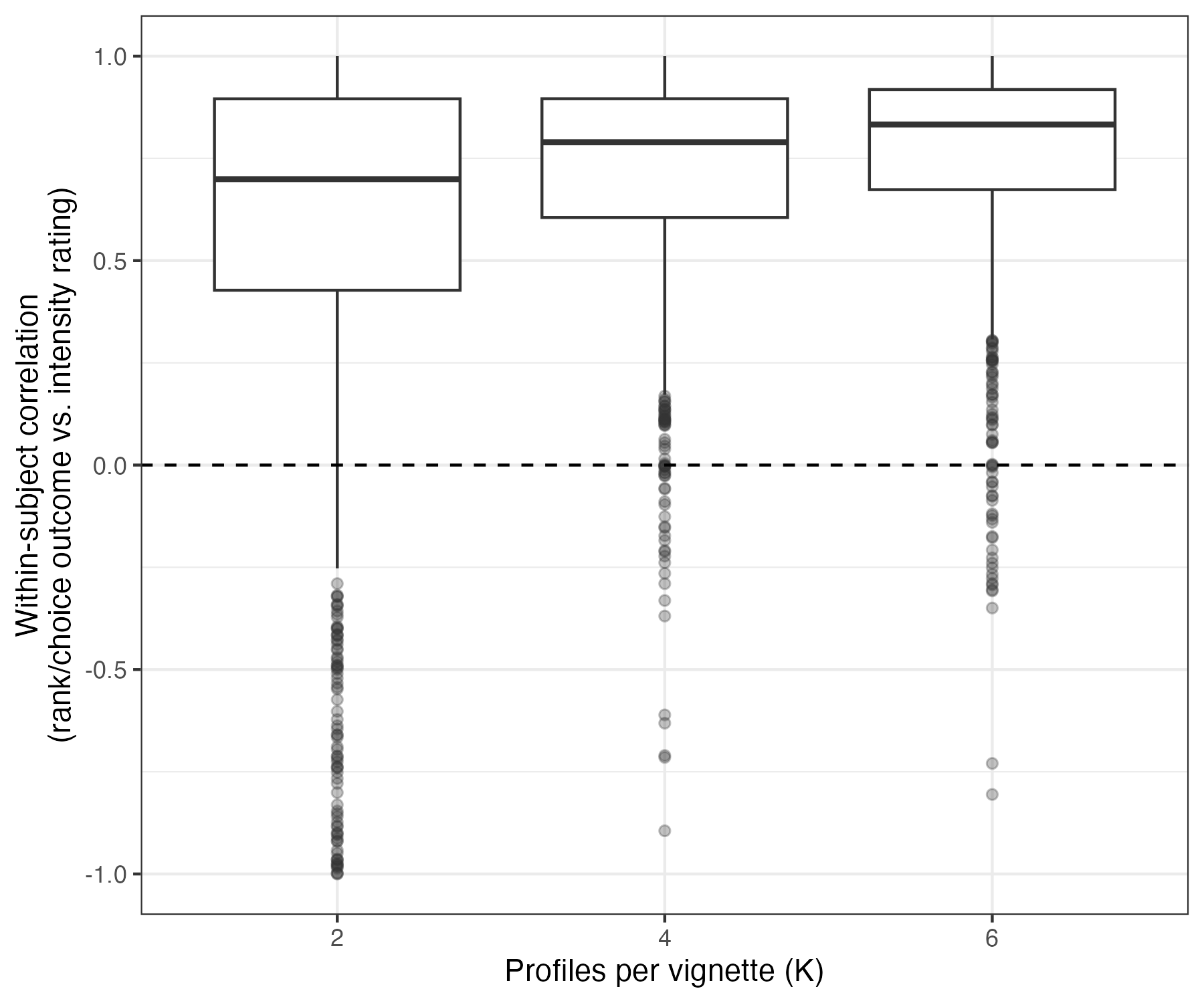}
    \caption{Study 2: Within-subject correlation between normalized rank/choice outcome and intensity rating (0-100 scale), by number of profiles per vignette. The increasing correlation reflects the finer granularity of the rank outcome at higher $K$.}
    \label{fig:rank_intensity}
\end{figure}

\subsection{Covariate balance across conditions}

Tables~\ref{tab:balance1} and~\ref{tab:balance2} report pre-treatment covariate means by experimental condition for Studies 1 and 2, respectively. No significant imbalances are detected across conditions in either study (all $p > 0.05$), confirming successful randomization.

\begin{table}[ht]
\centering
\begin{tabular}{lllr}
  \toprule
Variable & FCC & RCC & p-value \\ 
  \midrule
Age & 46.57 (15.91) & 47.28 (16.02) & 0.29 \\ 
  College degree & 0.57 (0.49) & 0.57 (0.49) & 0.96 \\ 
  Democrat & 0.14 (0.35) & 0.14 (0.35) & 0.87 \\ 
  Female & 0.51 (0.5) & 0.49 (0.5) & 0.24 \\ 
  Ideology (1-7) & 4.92 (2.79) & 5.13 (2.81) & 0.07 \\ 
  Republican & 0.14 (0.35) & 0.15 (0.36) & 0.37 \\ 
   \bottomrule
\end{tabular}
\caption{Study 1: Covariate balance across conditions. Cells report means (SDs). $p$-values from $F$-tests of equality across conditions.} 
\label{tab:balance1}
\end{table}

\begin{table}[ht]
\centering
\begin{tabular}{llllr}
  \toprule
Variable & $K=2$ & $K=4$ & $K=6$ & $p$-value \\ 
  \midrule
Age & 42.26 (13.64) & 43.01 (14.31) & 43 (13.89) & 0.40 \\ 
  College degree & 0.67 (0.47) & 0.69 (0.46) & 0.68 (0.47) & 0.57 \\ 
  Democrat & 0.18 (0.38) & 0.18 (0.38) & 0.17 (0.37) & 0.75 \\ 
  Female & 0.59 (0.49) & 0.58 (0.49) & 0.58 (0.49) & 0.87 \\ 
  Ideology (1-7) & 5.29 (3.08) & 5.18 (3.07) & 5.43 (3.08) & 0.22 \\ 
  Republican & 0.25 (0.43) & 0.23 (0.42) & 0.26 (0.44) & 0.18 \\ 
   \bottomrule
\end{tabular}
\caption{Study 2: Covariate balance across conditions. Cells report means (SDs). $p$-values from $F$-tests of equality across conditions.} 
\label{tab:balance2}
\end{table}

\subsection{Profile position effects}

In forced-choice conjoint experiments, respondents sometimes exhibit a preference for profiles displayed in a particular position \citep[e.g.,][]{Hainmueller2014}. The ranking interface introduces a related concern: respondents using a drag-and-drop interface may be biased toward leaving profiles near their initial display position. To test for this, we regress the rank outcome on profile display position (as a factor), controlling for all conjoint attributes, with standard errors clustered at the subject level.

Table~\ref{tab:position_effects} reports the results. In the forced-choice conditions, we find no significant position effects in either study. In the ranked-choice conditions, however, profiles displayed in later positions receive systematically worse (higher) ranks, consistent with a primacy bias in the ranking interface. In Study 1 ($K = 3$), profiles in position 2 receive ranks approximately 0.08 points worse, and position 3 approximately 0.13 points worse. In Study 2, this effect grows with $K$: at $K = 6$, the position 6 coefficient is 0.70 rank positions. This pattern likely reflects the drag-and-drop interface mechanics, where profiles start in a default order and respondents must actively reorder them. Importantly, including position as a covariate in the AMCE model does not meaningfully change the estimated AMCEs, since profile positions are balanced by the experimental randomization.

\begin{table}[ht]
\centering
\begin{tabular}{llrrr}
  \toprule
Condition & Position & Estimate & SE & $p$-value \\ 
  \midrule
FCC (S1) & Position 2 & 0.01 & 0.01 & 0.65 \\ 
  RCC rank (S1) & Position 2 & -0.08 & 0.02 & 0.00 \\ 
  RCC rank (S1) & Position 3 & -0.13 & 0.02 & 0.00 \\ 
  FCC (S2) & Position 2 & -0.03 & 0.02 & 0.08 \\ 
  RCC rank (S2, K=4) & Position 2 & 0.07 & 0.03 & 0.01 \\ 
  RCC rank (S2, K=4) & Position 3 & 0.15 & 0.03 & 0.00 \\ 
  RCC rank (S2, K=4) & Position 4 & 0.20 & 0.03 & 0.00 \\ 
  RCC rank (S2, K=6) & Position 2 & 0.17 & 0.04 & 0.00 \\ 
  RCC rank (S2, K=6) & Position 3 & 0.32 & 0.04 & 0.00 \\ 
  RCC rank (S2, K=6) & Position 4 & 0.39 & 0.04 & 0.00 \\ 
  RCC rank (S2, K=6) & Position 5 & 0.52 & 0.05 & 0.00 \\ 
  RCC rank (S2, K=6) & Position 6 & 0.70 & 0.05 & 0.00 \\ 
   \bottomrule
\end{tabular}
\caption{Profile display position effects. Coefficients from regressions of the choice/rank outcome on profile position indicators, controlling for all conjoint attributes. Standard errors clustered at the subject level (CR2).} 
\label{tab:position_effects}
\end{table}

\subsection{Predictors of transitivity and IIA violations}

\begin{table}
\begin{center}
\begin{tabular}{l c c}
\toprule
 & Transitivity & IIA \\
\midrule
$K = 4$                    & $0.70^{***}$  & $0.21$        \\
                           & $(0.13)$      & $(0.12)$      \\
$K = 6$                    & $0.75^{***}$  & $0.36^{**}$   \\
                           & $(0.14)$      & $(0.13)$      \\
Age                        & $-0.00$       & $-0.00$       \\
                           & $(0.00)$      & $(0.00)$      \\
Female                     & $0.05$        & $-0.09$       \\
                           & $(0.10)$      & $(0.09)$      \\
College degree             & $0.11$        & $0.20^{*}$    \\
                           & $(0.11)$      & $(0.10)$      \\
Democrat                   & $-0.08$       & $0.23$        \\
                           & $(0.14)$      & $(0.13)$      \\
Republican                 & $-0.08$       & $0.05$        \\
                           & $(0.12)$      & $(0.11)$      \\
Ideology                   & $0.08^{***}$  & $0.07^{***}$  \\
                           & $(0.02)$      & $(0.02)$      \\
Mean task time (s)         & $-0.00$       & $0.00$        \\
                           & $(0.00)$      & $(0.00)$      \\
FMC: attribute count error & $0.02$        & $0.04$        \\
                           & $(0.06)$      & $(0.05)$      \\
FMC: identified vignette   & $-0.14$       & $-0.21^{*}$   \\
                           & $(0.11)$      & $(0.10)$      \\
(Intercept)                & $-2.25^{***}$ & $-1.70^{***}$ \\
                           & $(0.24)$      & $(0.22)$      \\
\midrule
AIC                        & $2690.57$     & $2995.19$     \\
BIC                        & $2761.95$     & $3066.49$     \\
Log Likelihood             & $-1333.28$    & $-1485.60$    \\
Deviance                   & $2666.57$     & $2971.19$     \\
Num. obs.                  & $2831$        & $2812$        \\
\bottomrule
\multicolumn{3}{l}{\scriptsize{$^{***}p<0.001$; $^{**}p<0.01$; $^{*}p<0.05$}}
\end{tabular}
\caption{Logistic regressions predicting transitivity and IIA violations. Coefficients are log-odds. Standard errors in parentheses.}
\label{tab:violation_predictors}
\end{center}
\end{table}

\subsection{Stability of AMCE estimates across rounds}
\label{sec:stability}

To assess whether respondent fatigue or learning affects the AMCE estimates, we estimate separate models for each round using the existing round-by-round analyses reported in Figures~\ref{fig:amce_round_s1} and~\ref{fig:amce_round_s2}. We formally test for instability using Z-tests comparing the first-round and last-round AMCE estimates within each condition.

\begin{sidewaysfigure}
    \centering
    \includegraphics[width = \textwidth]{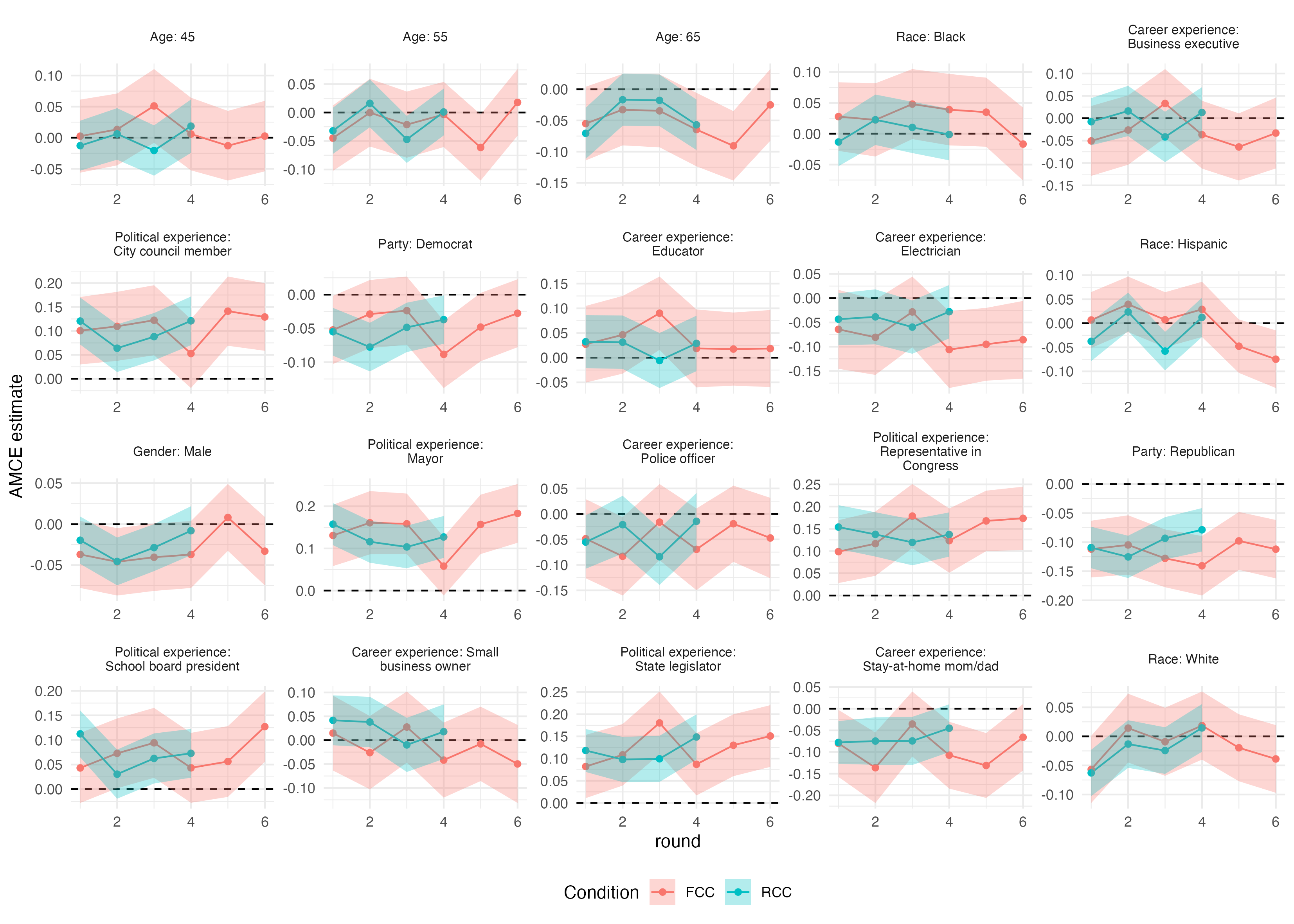}
    \caption{Study 1: Round-by-round AMCE estimates, candidate experiment.}
    \label{fig:amce_round_s1}
\end{sidewaysfigure}

\begin{sidewaysfigure}
    \centering
    \includegraphics[width = \textwidth]{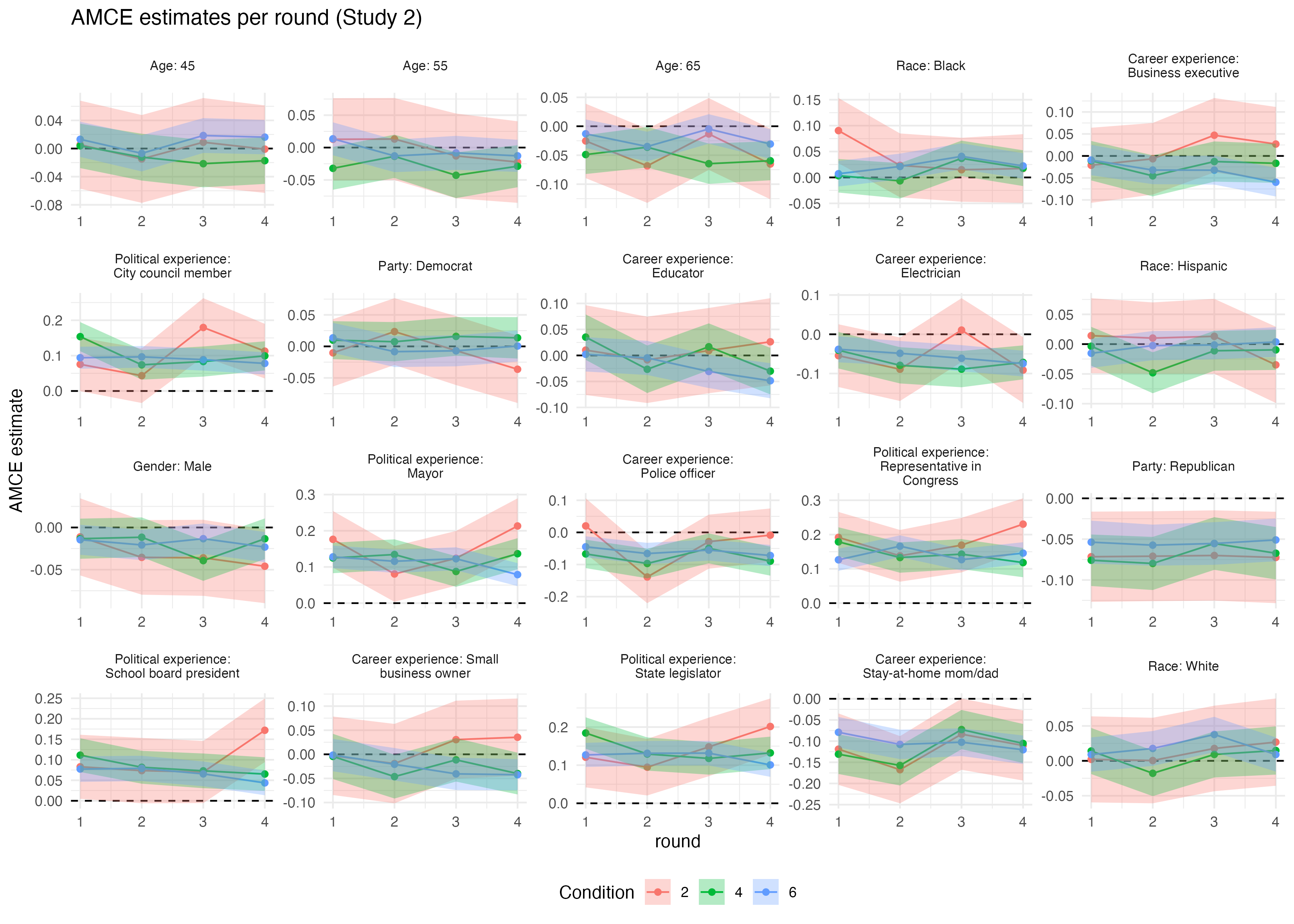}
    \caption{Study 2: Round-by-round AMCE estimates.}
    \label{fig:amce_round_s2}
\end{sidewaysfigure}

Point estimates are noisy at the individual-round level, as expected, but show no systematic trend across rounds in any condition. Of 120 pairwise Z-tests comparing first-round to last-round estimates, 7 are significant at $p < 0.05$--consistent with the 6 expected by chance. This suggests that neither learning nor fatigue systematically biases the AMCE estimates over the course of the experiment.

\subsection{Efficiency gains by respondent characteristics}

A natural concern is whether the efficiency gains of ranking are concentrated among more engaged or cognitively able respondents. To assess this, we split the sample by education (college degree or above versus no college degree) and re-estimate AMCEs within each subgroup. Table~\ref{tab:efficiency_subgroup} reports the mean standard error by condition and subgroup. The SE reduction from ranking is similar across both groups: 13.0\% versus 10.1\% in Study 1, and 41.8\% versus 41.1\% (at $K = 4$) in Study 2. This suggests that the efficiency advantages of ranked-choice designs are not contingent on respondent sophistication.

\begin{table}[htbp]
\centering
\caption{Mean standard errors by education subgroup and condition. SE reduction is relative to the forced-choice baseline within each subgroup.}
\label{tab:efficiency_subgroup}
\begin{tabular}{llccccc}
\toprule
Study & Subgroup & FCC/$K{=}2$ & RCC/$K{=}3$ & SE reduction & & \\
\midrule
Study 1 & College+ & 0.0187 & 0.0163 & 13.0\% & & \\
Study 1 & No college & 0.0215 & 0.0193 & 10.1\% & & \\
\midrule
 & & $K{=}2$ & $K{=}4$ & $K{=}6$ & SE red. ($K{=}4$) & SE red. ($K{=}6$) \\
\midrule
Study 2 & College+ & 0.0222 & 0.0129 & 0.0099 & 41.8\% & 55.1\% \\
Study 2 & No college & 0.0321 & 0.0189 & 0.0143 & 41.1\% & 55.6\% \\
\bottomrule
\end{tabular}
\end{table}

We also examine whether efficiency gains are driven by respondents who spend more time on each task. Splitting respondents by median completion time within each condition, we find that the SE reduction from ranking is similar for fast and slow respondents (Table~\ref{tab:efficiency_speed}): 10.3\% versus 13.3\% in Study 1. In Study 2, fast and slow respondents yield nearly identical standard errors within each condition. This suggests that the precision gains of ranked-choice designs are not solely attributable to more careful respondents taking longer.

\begin{table}[htbp]
\centering
\caption{Mean standard errors by completion speed (median split within condition). SE reduction is relative to the forced-choice baseline within each speed group.}
\label{tab:efficiency_speed}
\begin{tabular}{llccc}
\toprule
Study & Speed group & FCC/$K{=}2$ & RCC/$K{=}3$ & SE reduction \\
\midrule
Study 1 & Fast & 0.0197 & 0.0177 & 10.3\% \\
Study 1 & Slow & 0.0201 & 0.0175 & 13.3\% \\
\midrule
 & & $K{=}2$ & $K{=}4$ & $K{=}6$ \\
\midrule
Study 2 & Fast & 0.0259 & 0.0149 & 0.0114 \\
Study 2 & Slow & 0.0259 & 0.0152 & 0.0116 \\
\bottomrule
\end{tabular}
\end{table}

\subsection{AMCE estimates from intensity ratings}

An alternative to ranking would be to collect continuous preference ratings and estimate AMCEs directly from these. Figure \ref{fig:amce_intensity} plots AMCEs estimated from the 0-100 intensity ratings alongside those from the rank-expanded data in Study 2. The point estimates are similar across approaches (consistent with the high correlations reported in Section 5.4). When placed on a comparable scale, the standard errors are of similar magnitude, reflecting the additional information carried by the continuous outcome. This validates the ranking approach while confirming that the choice-based framework of rank expansion yields comparable precision.

\begin{figure}[htbp]
    \centering
    \includegraphics[width = 0.9\textwidth]{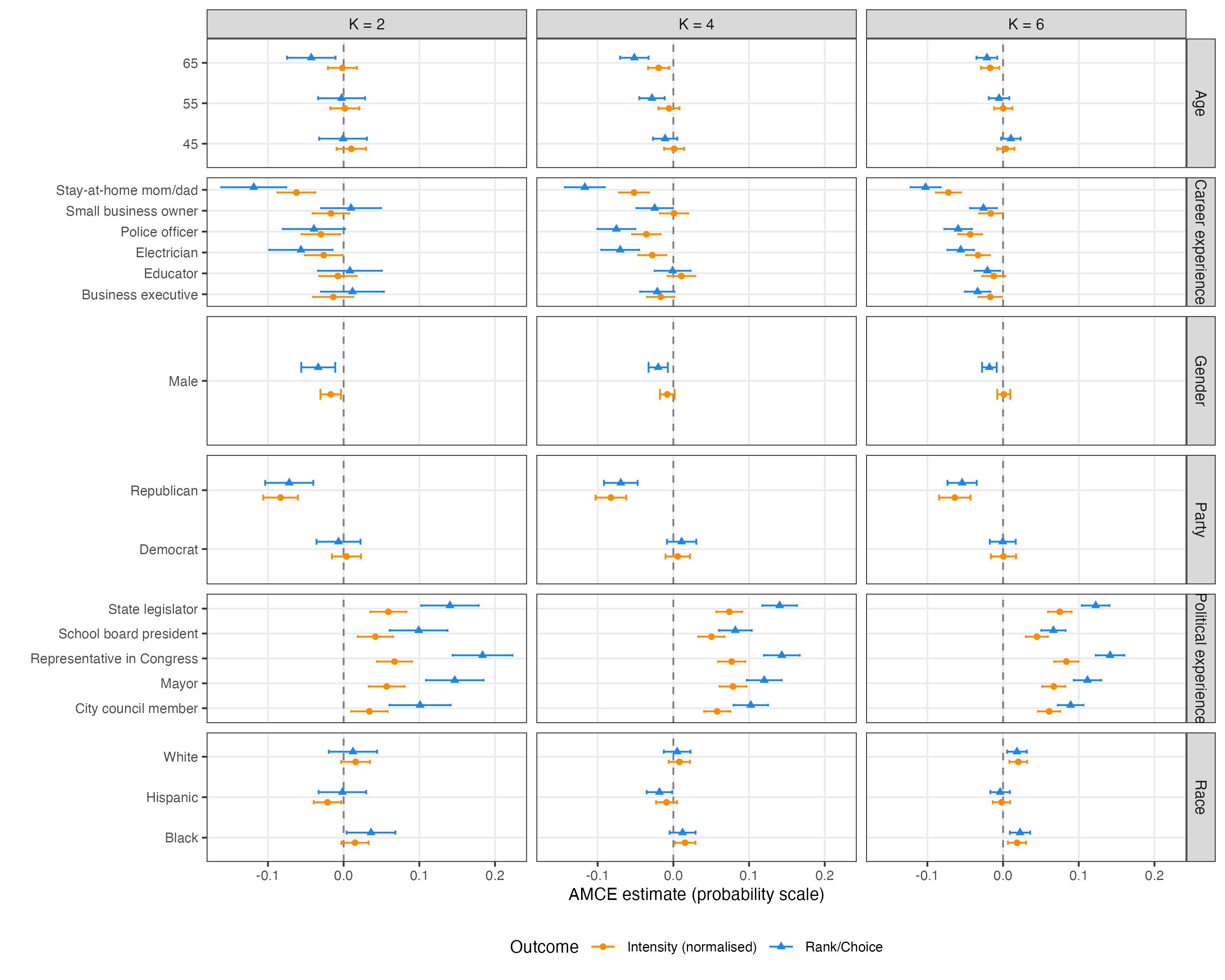}
    \caption{Study 2: AMCE estimates from rank-expanded choice data (blue) versus intensity ratings normalized to the same 0-1 probability scale (orange), by condition. Point estimates are closely aligned across both outcome measures.}
    \label{fig:amce_intensity}
\end{figure}

\subsection{Sensitivity to ranking errors}

To calibrate how robust the AMCE estimates are to ranking errors, we conduct a simulation exercise in which we randomly corrupt a fraction $p$ of pairwise comparisons in the rank-expanded data (from the $K = 4$ condition) and re-estimate the AMCEs. Figure \ref{fig:sensitivity} plots the mean absolute deviation of the corrupted AMCEs from the baseline as a function of $p$, averaged over 200 simulation iterations. At the empirically observed violation rate of approximately 22\%, the deviation is approximately 0.026 percentage points, confirming that the AMCE estimates are robust to the levels of measurement error we observe in practice.

\begin{figure}[htbp]
    \centering
    \includegraphics[width = 0.7\textwidth]{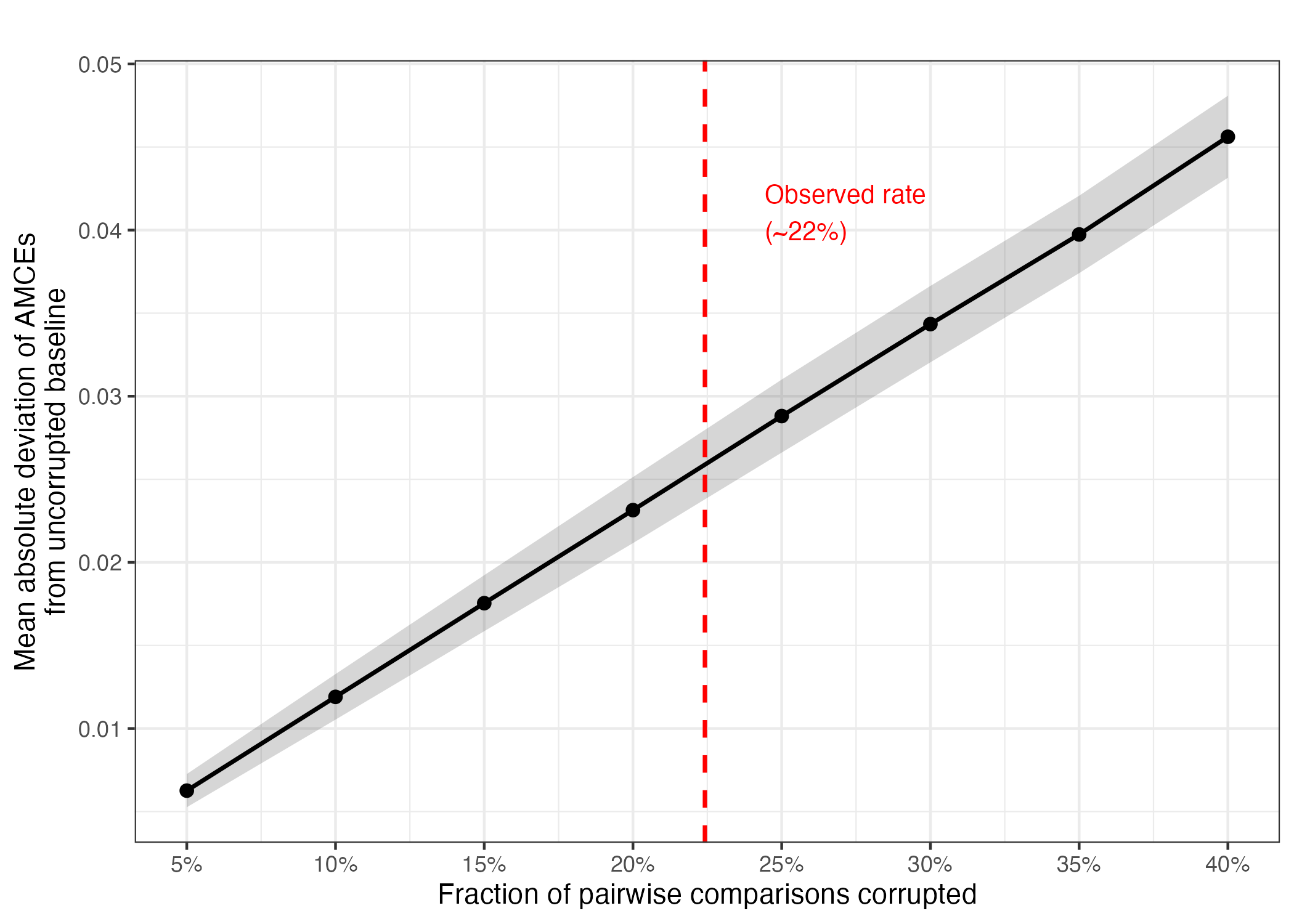}
    \caption{Mean absolute deviation of AMCEs from the uncorrupted baseline as a function of the fraction of corrupted pairwise comparisons. The red dashed line marks the empirically observed violation rate ($\sim$22\%). The shaded region shows $\pm$1 standard deviation across simulations.}
    \label{fig:sensitivity}
\end{figure}

\subsection{Robustness of ML results to model choice}

\begin{table}[htbp]
    \centering
    \begin{tabular}{lc}
    \toprule
    \textbf{Condition} & \textbf{AUC (Logistic)} \\
    \midrule
    FC candidate & 0.475 \\
    RC candidate & 0.484 \\
    FC policy & 0.548 \\
    RC policy & 0.484 \\
    \midrule
    2-profile & 0.610 \\
    4-profile & 0.612 \\
    6-profile & 0.610 \\
    \bottomrule
    \end{tabular}
    \caption{Classification performance using logistic regression instead of XGBoost. AUC scores are computed on the same out-of-sample test data as in Tables~\ref{tab:roc} and~\ref{tab:roc_followup}.}
    \label{tab:roc_logistic}
\end{table}

\subsection{Structural deep-learning modelling}
\label{sec:sconjoint} 

As a robustness check, we also fit the structural deep-learning random-utility model of \citet{acharya2026learningpreferencesconjointdata} separately on the rank-expanded pairs for $K \in \{2, 4, 6\}$ in the Study 2 candidate experiment, and score each respondent's recovered individual preference vector $\hat{\boldsymbol{\beta}}(\mathbf{Z}_i)$ against their held-out randomized vignette.\footnote{The model parameterizes $\hat{\boldsymbol{\beta}}(\mathbf{Z}_i)$ via a neural network of respondent characteristics, with a double/debiased machine-learning correction for valid inference on population averages. We use the production configuration from \citet{acharya2026learningpreferencesconjointdata}: 10 cross-fit folds, 2{,}000 training epochs, and a compact moderator set $\mathbf{Z} = \{$gender, partisanship, ideology$\}$; with a larger $\mathbf{Z}$ the neural network collapses to a near-constant function of respondent characteristics at our sample size.} Using the full set of rank-expanded pairs, held-out AUC rises with $K$, from 0.70 at $K = 2$ to 0.78 at $K = 6$ (Appendix Table \ref{tab:sconjoint}, upper panel). To isolate whether this gain reflects ranking-specific signal in middle-ranked pairs, or simply the larger number of pairwise comparisons that ranking yields, we re-fit the model after subsampling each respondent in the $K = 4$ and $K = 6$ conditions down to four expanded pairs--matching the per-respondent pair count of $K = 2$. The gain disappears: subsampled $K = 4$ and $K = 6$ AUCs return to within confidence intervals of the $K = 2$ baseline (0.69 and 0.71, respectively; lower panel). The estimator's improvement under ranking therefore reflects \textit{pair quantity}, with rank-expanded pairs behaving as equivalent in kind to forced-choice pairs under the random-utility model--consistent with our measurement-level finding that the binarized top-choice signal is equally strong regardless of $K$. This corroborates the efficiency mechanism central to our paper: rankings deliver value by harvesting more within-respondent comparisons, which a structural estimator converts into better-estimated individual preference vectors, not by eliciting a qualitatively different kind of preference data.

\end{document}